\documentclass[11pt]{article}
\usepackage{geometry}
\usepackage{amsmath,amsthm,amssymb,mathtools}
\usepackage{xcolor}
\usepackage{booktabs,multirow,multicol,makecell}
\usepackage{enumerate}
\usepackage{graphicx}
\usepackage[pdftex,unicode, plainpages = false, pdfpagelabels, 
bookmarks=false,
bookmarksopen = true,
bookmarksnumbered = true,
breaklinks = true,
linktocpage,
pagebackref,
colorlinks = true,  
linkcolor = blue,
urlcolor  = blue,
citecolor = red,
anchorcolor = green,
hyperindex = true,
hyperfigures
]{hyperref} 
\hypersetup{
    colorlinks=true, 
    linktoc=all,     
    linkcolor=blue,
    citecolor=red,  
}
\usepackage{thmtools}   
\usepackage{cleveref}   

\newtheorem{theorem}{Theorem}
\newtheorem{lemma}{Lemma}
\newtheorem{definition}{Definition}
\newtheorem{corollary}{Corollary}
\newtheorem{hypothesis}{Hypothesis}

\newcommand{\bigOstar}[1]{\mathcal{O}^*(#1)}
\newcommand{\bigO}[1]{\mathcal{O}(#1)}

\DeclareMathOperator{\var}{var}
\DeclareMathOperator{\ltr}{lit}
\DeclareMathOperator{\WMC}{WMC}
\DeclarePairedDelimiter{\abs}{\lvert}{\rvert}

\newcommand{\nl}{\overline}
\newcommand{\formula}{\varphi}
\newcommand{\literal}{\ell}
\newcommand{\model}{\sigma}
\newcommand{\wfunc}{w}
\newcommand{\algcount}{\texttt{alg-count}}

\newcommand{\ind}[1]{\mathbf{1}[#1]}
\DeclareMathOperator{\Occ}{Occ}
\DeclareMathOperator{\OPT}{OPT}
\DeclareMathOperator{\supp}{supp}
\DeclareMathOperator{\score}{score}

\begin{document}
\title{Solving Restricted Versions of \#SAT Faster than $2^m$:\\ Algorithms and Lower Bounds
\footnotetext{Sanjay Jain and Frank Stephan were supported by Singapore Ministry of Education (MOE) AcRF Tier 2 grant MOE-000538-01. Additionally, Sanjay Jain was supported by NUS grant E-252-00-0021-09. Junqiang Peng was supported by the European Research Council (ERC) via grant NewPC, reference 101199930.}}
\author{
    Sanjay Jain\thanks{School of Computing, National University of Singapore, Singapore.}
    \\
    \texttt{sanjayjain@nus.edu.sg}
    \and
    Junqiang Peng\thanks{University of Bergen, Norway.} \thanks{University of Electronic Science and Technology of China, China.}
    \\
    \texttt{jqpeng0@foxmail.com}
    \and
    Frank Stephan\footnotemark[1] \thanks{Department of Mathematics, National University of Singapore, Singapore}
    \\
    \texttt{fstephan@nus.edu.sg}
    \and
    Mingyu Xiao\footnotemark[3]
    \\
    \texttt{myxiao@uestc.edu.cn}
}
\date{}
\maketitle

\begin{abstract}
Under the Strong Exponential Time Hypothesis (SETH), for any constant $\varepsilon > 0$, the \#SAT problem (counting the number of satisfying assignments of a CNF formula) cannot be solved in $\bigOstar{(2-\varepsilon)^n}$ or $\bigOstar{(2-\varepsilon)^m}$ time, where $n$ is the number of variables and $m$ is the number of clauses. On the other hand, simple exhaustive search and the standard inclusion-exclusion give $\bigOstar{2^n}$-time and $\bigOstar{2^m}$-time algorithms, respectively. A natural line of research is to beat these SETH-based barriers for restricted versions, such as \#$k$-SAT (where each clause contains at most $k$ literals) or \#$d$-occ-SAT (where each variable appears in at most $d$ clauses). It is known that the $2^n$-barrier can be broken when $k$ or $d$ is any constant; yet, whether the $2^m$-barrier can also be broken in these cases remains unknown.

We present two simple algorithms that overcome the $2^m$-barrier. First, for \#$k$-SAT, we give an $\bigOstar{2^{m(1-1/O(k))}}$-time algorithm, thereby breaking the $2^m$-barrier for any constant $k$. Second, for \#$2$-occ-SAT (which also includes the problem of counting the number of edge covers in a graph as a special case), we provide an algorithm with runtime $\bigOstar{2^{m\omega/3}}$, where $\omega < 2.3714$ is the matrix multiplication exponent. Furthermore, we demonstrate that both algorithms generalize to the weighted counting setting while preserving the same runtime bounds.

On the hardness side, we present a reduction from \#Max-$d$-SAT to \#$d$-occ-SAT, showing that for any $d\ge 3$, \#$d$-occ-SAT admits no $\bigOstar{(2-\varepsilon)^m}$-time algorithm for any constant $\varepsilon>0$, even on positive formulas, under the hypothesis that \#Max-3-SAT on $n$ variables admits no $\bigOstar{(2-\varepsilon)^n}$-time algorithm for any such $\varepsilon$.

\end{abstract}

\newpage
\section{Introduction}
The {\sc Satisfiability} (SAT) problem, which involves determining the satisfiability of a given propositional formula in Conjunctive Normal Form (CNF), is one of the most fundamental and important {\sf NP}-complete problems~\cite{conf/stoc/Cook71,journals/pit/levin73} in theoretical computer science. 
Its counting variant, {\sc Propositional Model Counting} (\#SAT), requires computing the exact number of satisfying assignments (also called models). As a canonical \#{\sf P}-complete~\cite{journals/tcs/Valiant79} problem, \#SAT finds applications in various areas (see \cite{series/faia/GomesSS21} for a comprehensive introduction). 

In this paper, we focus on exact exponential-time algorithms for \#SAT. This falls into the rich area of research devoted to designing exact algorithms for {\sf NP}-hard problems that are faster than na\"\i ve brute-force methods and improve runtimes beyond natural barriers. A recent survey by Nederlof~\cite{journals/csr/Nederlof26} refers to this field as the ``fine-grained complexity of {\sf NP}-complete problems''; we direct interested readers to this survey for a broader context.

By enumerating all $2^n$ possible assignments, SAT and \#SAT can be solved in $\bigOstar{2^n}$ time, where $\bigOstar{\cdot}$ omits polynomial factors in the input size and $n$ is the number of variables.
However, no algorithm for SAT with runtime $\bigOstar{(2-\varepsilon)^n}$ is known for any constant $\varepsilon > 0$. Under the Strong Exponential Time Hypothesis (SETH)~\cite{journals/jcss/ImpagliazzoP01}, no such algorithm exists. 

While the $2^n$-barrier is insurmountable under SETH for general SAT and \#SAT, it can be broken for \emph{restricted} classes of formulas.
Notable examples include $k$-CNF (where each clause contains at most $k$ literals) and $d$-occ-CNF (where each variable appears in at most $d$ clauses). The corresponding problems are denoted by $k$-SAT, $d$-occ-SAT, \#$k$-SAT, and \#$d$-occ-SAT, respectively.
It is well-established that $k$-SAT can be solved in $\bigOstar{2^{n(1-\Omega(1/k))}}$ time via various algorithmic paradigms~\cite{journals/jacm/PaturiPSZ05,conf/focs/Schoning99,conf/soda/ChanW16,conf/stoc/BrakensiekG19}, and $d$-occ-SAT can be solved in $\bigOstar{2^{n(1-\Omega(1/\log d))}}$ time~\cite{conf/coco/CalabroIP06,series/faia/DantsinH21}.
Remarkably, the same runtime bounds can be achieved for the counting counterparts: Impagliazzo, Matthews, and Paturi~\cite{conf/soda/ImpagliazzoMP12} gave Las Vegas randomized algorithms for \#$k$-SAT in $\bigOstar{2^{n(1-\Omega(1/k))}}$ time and \#$d$-occ-SAT in $\bigOstar{2^{n(1-\Omega(1/\log d))}}$ time. Subsequently, Chan and Williams~\cite{conf/soda/ChanW16} developed the deterministic algorithms achieving the respective runtime bounds, albeit requiring exponential space.

In this paper, we turn our attention to another natural parameter: the number of clauses $m$.
The complexity landscape is markedly different under this parameterization. In particular, SAT can be solved in $\bigOstar{1.2226^m}$ time~\cite{journals/tcs/ChuXZ21}, whereas \#SAT can be solved in $\bigOstar{2^m}$ time via standard inclusion-exclusion~\cite{journals/siamcomp/Iwama89,journals/ipl/Lozinskii92}. Moreover, the authors in \cite{journals/talg/CyganDLMNOPSW16} showed that no $\bigOstar{(2-\varepsilon)^m}$-time algorithm exists for any $\varepsilon>0$ under SETH.

For restricted versions, it is known that weighted \#2-SAT can be solved in $\bigOstar{1.1082^m}$ time and exponential space~\cite{conf/ijcai/0001S025} (the definition of the weighted setting is deferred later). Furthermore, an $\bigOstar{1.1082^L}$-time algorithm for general \#SAT can be obtained by reducing an instance of length $L$ (the sum of the lengths of all clauses) to a weighted \#2-SAT instance with $m = L$ clauses~\cite{conf/lics/BannachDGH25,conf/sosa/BannachDGH26} and invoking the aforementioned algorithm~\cite{conf/ijcai/0001S025}. For \#$k$-SAT, since $L \le km$, this yields an $\bigOstar{1.1082^{km}}$-time algorithm and breaks the $2^m$-barrier for \#$k$-SAT when $k \le 6$. Nevertheless, whether this barrier can be broken for any arbitrary fixed $k$ remains unknown. Moreover, for \#$d$-occ-SAT, to the best of our knowledge, no algorithm was known to break the $2^m$-barrier even for the most restrictive case of $d=2$.

\begin{table}[t]
    \centering
    \begin{tabular}{llll}
    \toprule
    \multirow{2}{*}{\textbf{Problems}} & \multicolumn{2}{c}{\textbf{Previous Results}} & \multirow{2}{*}{\textbf{Our Results}} \\
    \cmidrule(lr){2-3}
    & \textbf{Parameter $n$} & \textbf{Parameter $m$} & \\
    \midrule
    \multirow{2}{*}{\#SAT} & $\bigOstar{2^n}$ & $\bigOstar{2^m}$ & \multirow{2}{*}{--} \\
    & {\scriptsize Det. \& Poly-S} & {\scriptsize Det. \& Poly-S \cite{journals/siamcomp/Iwama89,journals/ipl/Lozinskii92}} &\\
    \midrule
    \multirow{2}{*}{\#$2$-SAT} & $\bigOstar{1.2377^n}$ & $\bigOstar{1.1082^m}$ & \multirow{2}{*}{--} \\
    & {\scriptsize Det. \& Poly-S \cite{conf/iwpec/Wahlstrom08}} & {\scriptsize Det. \& Exp-S \cite{conf/ijcai/0001S025}} & \\
    \midrule
    \multirow{3}{*}{{\makecell[l]{\#$k$-SAT \\($k\ge 3$)}}} & $\bigOstar{2^{n(1-\Omega(1/k))}}$ & {$\bigOstar{1.1082^{km}}$ for $k\le 6$} & \multirow{2}{*}{$\bigOstar{2^{m(1-\Omega(1/k))}}$} \\
    & {\scriptsize Rand. \& Poly-S \cite{conf/soda/ImpagliazzoMP12}} & {\scriptsize \makecell[l]{Det. \& Exp-S \\ \cite{conf/lics/BannachDGH25,conf/sosa/BannachDGH26,conf/ijcai/0001S025}}} & \multirow{2}{*}{\scriptsize Det. \& Poly-S [\Cref{sec:k-CNF}]} \\
    & {\scriptsize Det. \& Exp-S \cite{conf/soda/ChanW16}} & $\bigOstar{2^m}$ for $k\ge 7$ &\\
    \midrule
    \multirow{2}{*}{\#$2$-occ-SAT} & $\bigOstar{1.2282^n}$ & \multirow{2}{*}{$\bigOstar{2^m}$} & $\bigOstar{2^{m\omega/3}}\subseteq \bigOstar{1.7297^m}$ \\
    & {\scriptsize \makecell[l]{Det. \& Exp-S \\\cite{conf/lics/BannachDGH25,conf/sosa/BannachDGH26,conf/ijcai/0001S025}}} & & {\scriptsize Det. \& Exp-S [\Cref{sec:2-occ-CNF}]} \\
    \midrule
    \multirow{3}{*}{\makecell[l]{\#$d$-occ-SAT \\($d\ge 3$)}} & $\bigOstar{2^{n(1-\Omega(1/\log d))}}$ & \multirow{3}{*}{$\bigOstar{2^m}$} & \multirow{3}{*}{\makecell[l]{No $\bigOstar{(2-\varepsilon)^m}$-time algorithm\\ under the \#Max-3-SAT Hypothesis \\  {\scriptsize[\Cref{sec:max-to-occ}]}}} \\
    & {\scriptsize Rand. \& Poly-S \cite{conf/soda/ImpagliazzoMP12}} &&\\
    & {\scriptsize Det. \& Exp-S \cite{conf/soda/ChanW16}} &&\\
    \bottomrule
\end{tabular}

    \caption{Summary of known results and our results. The runtime bound $\bigOstar{1.2282^n}$ for \#$2$-occ-SAT follows directly from the $\bigOstar{1.1082^L}$-time algorithm for \#SAT since $L \le 2n$. Here, $\omega<2.3714$ is the matrix multiplication exponent~\cite{conf/soda/AlmanDWXXZ25}. `Det.' and `Rand.' refer to `deterministic' and `randomized', respectively, while `Poly-S' and `Exp-S' stand for `polynomial space' and `exponential space'.}
    \label{tab:intro-summary}
\end{table}

\paragraph*{Why parameter $m$.}
The parameters $n$ and $m$ capture two different natural search spaces: assignments of variables and subsets of clauses, underlying the $\bigOstar{2^n}$-time exhaustive-search and $\bigOstar{2^m}$-time inclusion-exclusion algorithms, respectively.
Results parameterized by $n$ do not directly yield savings over the $2^m$-barrier; for example, although a $k$-CNF satisfies $n\le km$, this relation turns an $\bigOstar{2^{n(1-\Omega(1/k))}}$-time algorithm only into an $\bigOstar{2^{m(k-\Omega(1))}}$-time algorithm. Thus, parameter $m$ provides a natural perspective complementary to parameter $n$.

Furthermore, \#SAT parameterized by $m$ contains \#{\sc All Set Cover} parameterized by the universe size as a special case.
Given a universe $\mathcal U$ and subsets $S_1,\ldots,S_r$ of $\mathcal U$ (not necessarily distinct), \#{\sc All Set Cover} asks for the number of index sets $I\subseteq\{1,\ldots,r\}$ such that $\bigcup_{i\in I}S_i=\mathcal U$.
Introducing one variable $x_i$ for each $i\in\{1,\ldots,r\}$ and one positive clause $C_u=\bigvee_{i:\,u\in S_i}x_i$ for each $u\in\mathcal U$ gives a bijection between satisfying assignments and set covers, with $m=|\mathcal U|$.

\paragraph*{Our main results.}
In this work, we present algorithms that break the $2^m$-barrier for restricted versions of \#SAT, namely \#$k$-SAT for any fixed $k$ and \#$2$-occ-SAT, as well as hardness results for \#$d$-occ-SAT for every fixed $d\ge 3$ via a reduction from \#Max-$d$-SAT. Known results and our results for \#SAT and its restricted versions are summarized in \Cref{tab:intro-summary}.

Our algorithmic results are as follows.

\begin{restatable}{theorem}{kCnfComplexityThm}
\label{thm:k-cnf-complexity}
\#$k$-SAT with $m$ clauses can be solved in $\bigOstar{2^{m(1 - \frac{1}{5k})}} \subseteq \bigOstar{2^{m(1 - \Omega(1/k))}}$ time and polynomial space.
\end{restatable}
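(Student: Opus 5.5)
The plan is to combine the two natural search spaces, assignments of variables and subsets of clauses, in a single polynomial-space recursion. The potential is the number of clauses $m$, and the goal is a branching vector whose root is at most $2^{1-1/(5k)}$. Two elementary reductions are available at every node.

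\emph{Variable branching.} For a variable $x$ with $p$ positive and $q$ negative occurrences,
\[
\#F=\#F[x{=}1]+\#F[x{=}0].
\]
The branch $x{=}1$ deletes at least $p$ clauses and the branch $x{=}0$ deletes at least $q$ clauses.

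\emph{Clause inclusion--exclusion.} For a clause $C$,
\[
\#F=\#(F\setminus C)-\#\bigl((F\setminus C)[\,C \text{ falsified}\,]\bigr),
\]
with the variables of $C$ that occur nowhere else accounted for by a factor of $2$. The first branch removes $C$. The second branch fixes all $\le k$ variables of $C$, so it also deletes every clause containing a literal complementary to one in $C$. Each recursive call is on a formula with strictly fewer clauses, and only polynomial space is used.

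I would organise the proof around a case distinction on occurrence structure.

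\emph{Case 1.} Some variable has $p,q\ge1$ and $p+q\ge3$. Then variable branching gives a vector of at least $(1,2)$, whose root is $<1.62$; this is far below $2^{1-1/(5k)}$.

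\emph{Case 2.} Some variable $x$ is pure and occurs in $r\ge2$ clauses, say all positively. Then I would apply inclusion--exclusion to the clauses containing $x$ as a block: $x{=}1$ satisfies all $r$ of them, and in the branch $x{=}0$ they merely shrink. This is not yet a gain by itself. The saving comes from iterating until the shrunken clauses are falsified or satisfied via their other variables. This is exactly where the bound $k$ on clause length enters, since a clause can shrink at most $k$ times before it disappears or becomes empty.

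\emph{Case 3.} Neither case applies. Then every variable occurs at most twice, or occurs only in pure fashion within a single clause. In this sparse regime I would finish with the cheaper of two options. If $n\le m(1-\frac{1}{5k})$, enumerate all $2^n$ assignments. Otherwise use the bound $\sum_C |C|\le km$: it forces many clauses to contain private variables, meaning variables occurring only in that clause. A clause with a private variable contributes a factor $2^{a}-1$ (with $a$ the number of its private variables) rather than requiring a full $2$-way split. Summing over such clauses via inclusion--exclusion over only the remaining clauses gives the required saving.

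The main obstacle is the amortisation that turns these local gains into the global exponent $1-\frac{1}{5k}$. A clause can be shortened up to $k$ times before it contributes a deletion, which is why a factor of order $1/k$ (and my guess is exactly the constant $5$) appears.

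My first attempt would be a potential of the form $m-\frac{1}{ck}\cdot(\text{total literal deficit})$. I would then verify, case by case, that every branching vector measured in this potential has root at most $2^{1-1/(5k)}$. The delicate cases are Case 2 with $r=2$ and the pure-literal regime of Case 3, so I expect the constant to be tuned there.
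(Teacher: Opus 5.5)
\textbf{There is a genuine gap.} It lies in your Case~2 (a pure variable of degree $r\ge 2$). That case contains every positive formula, i.e.\ counting set covers with element frequency at most $k$, so it carries the whole difficulty, and you give no concrete rule for it.

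\emph{Single variable branching fails.} Branching on $x$ gives a branch $x=0$ in which no clause disappears, so your remark that every call has strictly fewer clauses is false. Under your potential $m-\frac{1}{ck}\cdot(\text{deficit})$, the case $r=2$ gives a vector of at best $(2,\frac{2}{ck})$. At $\tau=2$ this gives $2^{-2}+2^{-2/(ck)}>1$ as soon as $ck\ge 5$, so the root exceeds $2$.

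\emph{The natural iteration also fails.} Take a clause whose variables are all pure of degree $2$ and branch $x_1=1$; $x_1=0,x_2=1$; $\dots$. This gives only $T(m)\le k\,T(m-2)$, i.e.\ root $\sqrt{k}\ge 2$ for $k\ge 4$. The deficit credit is below $1/c$ per branch, which cannot compensate for $k$ branches that each remove only two clauses. So the amortisation you sketch does not yield $2^{1-\Omega(1/k)}$, let alone the constant $5$.

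\emph{Case~3 is also unsound as argued.} The conditions $n>m(1-\frac{1}{5k})$ and $\sum_C|C|\le km$ do not force private variables. For example, let every variable occur once positively and once negatively, with all clauses of length $k\ge 3$; then $n=km/2$ and nothing is private. Moreover, a clause with only some private variables still needs a two-way split. This case is in fact easy by clause branching: falsifying $C$ satisfies every other clause sharing a variable with $C$, which gives the vector $(1,2)$.

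\emph{The missing idea is clause branching with a witness; variable branching is not needed at all.} The paper proceeds as follows.
\begin{enumerate}
\item Select a pivot clause $C$ minimising $\eta=|V_{\ge 2}(\varphi,C)|$ and pick $x\in V_{\ge 2}(\varphi,C)$.
\item Apply inclusion--exclusion to a \emph{different} clause $C'\ni x$, and measure progress by $\mu=m+\eta/k$.
\item Deleting $C'$ lowers $m$ by one and cannot raise $\eta$. The clause $C$ survives, and its count of shared variables can only drop.
\item Falsifying $C'$ assigns $x$. This removes $x$ from $C$ and lowers $\eta$ by at least one, or satisfies $C$, which removes a second clause.
\item Once $\eta=0$, the isolated clause is factored out as $2^{|C|}-1$ at no cost in $\mu$.
\end{enumerate}
This yields $T(\mu)\le T(\mu-1)+T(\mu-1-\frac1k)$, whose root is at most $2^{1-1/(5k)}$. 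No case analysis on polarities or occurrence counts is required.
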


\begin{restatable}{theorem}{TwoOccCNFComplexityThm}\label{thm:2-occ-cnf-complexity}
    \#$2$-occ-SAT with $m$ clauses can be solved in $\bigOstar{2^{m \omega / 3}}\subseteq \bigOstar{1.7297^m}$ time and exponential space, where $\omega < 2.3714$ denotes the matrix multiplication exponent.
\end{restatable}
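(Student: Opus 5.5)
The plan is to run the inclusion--exclusion formula behind the $\bigOstar{2^m}$ algorithm and evaluate its $2^m$-term sum through matrix multiplication, in the spirit of Williams' $2$-CSP algorithm. Let $\formula$ have clauses $C_1,\dots,C_m$. I write the number of models as
\[
\#\formula=\sum_{T\subseteq\{1,\dots,m\}}(-1)^{|T|}\,N(T),
\]
where $N(T)$ is the number of assignments falsifying every clause $C_j$ with $j\in T$. The key observation is that $N(T)$ factorizes over variables: $N(T)=\prod_{x}g_x(T)$. Here $g_x(T)\in\{0,1,2\}$ counts the values of $x$ that do not satisfy any clause of $T$ containing $x$. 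Concretely, $g_x(T)=2$ if no clause of $T$ contains $x$, $g_x(T)=0$ if $T$ contains clauses with both $x$ and $\nl{x}$, and $g_x(T)=1$ otherwise. Since $x$ occurs in at most two clauses, $g_x(T)$ depends only on $T\cap\Occ(x)$, a set of at most two clause indices. Thus the summand is a product of unary and binary factors over the clause indices, i.e.\ a weighted counting $2$-CSP with $m$ Boolean variables (the membership bits of $T$).

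Next, I partition the clause indices into three parts $P_1,P_2,P_3$ of size about $m/3$ and write $T=T_1\cup T_2\cup T_3$ with $T_i\subseteq P_i$. Each factor $g_x$ and each sign $(-1)^{|T_j|}$ is assigned to a part or pair of parts as follows.
\begin{itemize}
\item Factors whose clause indices lie inside one part, together with that part's sign, go into a unary weight $u_i(T_i)$.
\item Factors of variables occurring in one clause of $P_i$ and one of $P_j$ go into a pairwise weight $M_{ij}(T_i,T_j)$.
\end{itemize}
To avoid double counting, I absorb the unary weights into the matrices, e.g.\ define $A(T_1,T_2)=u_1(T_1)M_{12}(T_1,T_2)$, $B(T_2,T_3)=u_2(T_2)M_{23}(T_2,T_3)$ and $C(T_3,T_1)=u_3(T_3)M_{31}(T_3,T_1)$. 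Then
\[
\#\formula=\sum_{T_1,T_2,T_3}A(T_1,T_2)\,B(T_2,T_3)\,C(T_3,T_1)=\operatorname{tr}(ABC).
\]

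Each matrix has dimension $2^{m/3}\times 2^{m/3}$, and each entry is computable in polynomial time. So building them costs $\bigOstar{2^{2m/3}}$, and the two products plus the trace cost $\bigOstar{2^{m\omega/3}}$ in exponential space, as claimed; if $3\nmid m$, I pad with dummy always-false clauses or use uneven parts, affecting only polynomial factors. The entries are integers of absolute value at most $2^n$, and all intermediate sums are bounded by $2^{O(n+m)}$. This means polynomially many bits suffice, and rational arithmetic matrix multiplication stays within the $\bigOstar{\cdot}$ bound; for the weighted version, $g_x$ is replaced by the sum of $w$ over the allowed literal values, with the same bit-length argument.

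The main point needing care is the factorization step. I will check that variables occurring only once or twice in the same clause, and clauses containing both $x$ and $\nl{x}$, still give correct local factors; a tautological clause simply gets $g_x=0$ whenever it is in $T$, which is consistent. I will also check that the sign and the factors are each counted exactly once across $A$, $B$ and $C$. Once that bookkeeping is verified, the runtime bound is immediate.
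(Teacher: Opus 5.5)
Your proposal is correct and follows essentially the same route as the paper: inclusion--exclusion, factorization of $N(T)$ into per-variable factors that each depend on at most two clause-membership bits, a three-way split of the clause indices with the unary factors and signs absorbed into the three pairwise matrices, and evaluation as $\operatorname{tr}(ABC)$ by fast matrix multiplication. The one slip is the padding remark: padding with always-false (empty) clauses would make the count $0$, so rely on your uneven-parts alternative, or pad as the paper does with unit clauses on fresh variables, which leave $\#\formula$ unchanged.
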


An \emph{edge cover} of a multigraph $G=(V,E)$ is a subset of edges incident to every vertex in $V$, where parallel edges are treated as distinct and a loop covers its incident vertex.
{\sc \#All Edge Cover} involves computing the number of edge covers of a given multigraph (where parallel edges and loops are allowed), which naturally reduces to \#$2$-occ-SAT.
Specifically, {\sc \#All Edge Cover} is the special case of {\sc \#All Set Cover} in which the universe is $V$ and each edge corresponds to the set of its incident vertices, of size one or two.
Under the reduction to \#SAT described above, each non-loop edge variable appears in the two clauses corresponding to its endpoints, while each loop variable appears only in the clause corresponding to its incident vertex.
Thus each variable occurs in at most two clauses, yielding a \#$2$-occ-SAT instance with $m=|V|$ clauses.
To the best of our knowledge, no algorithm for {\sc \#All Edge Cover} with runtime $\bigOstar{(2-\varepsilon)^{|V|}}$ for any $\varepsilon>0$ was known.
Our \Cref{thm:2-occ-cnf-complexity} implies an algorithm for {\sc \#All Edge Cover}:

\begin{restatable}{corollary}{AllEdgeCoverCor}
\label{cor:edge-cover}
{\sc \#All Edge Cover} on multigraphs with $n$ vertices can be solve in $\bigOstar{2^{n \omega / 3}}\subseteq \bigOstar{1.7297^{n}}$ time and exponential space.
\end{restatable}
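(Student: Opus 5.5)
The corollary follows from \Cref{thm:2-occ-cnf-complexity} together with the reduction from {\sc \#All Edge Cover} to \#$2$-occ-SAT described just before the statement. The plan is to spell out that reduction, check that it is a parsimonious bijection whose output has exactly $n$ clauses with every variable occurring at most twice, and then invoke the theorem.

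\textbf{Construction.} Given a multigraph $G=(V,E)$ with $|V|=n$, introduce one Boolean variable $x_e$ for every edge $e\in E$, so that parallel edges receive distinct variables. For every vertex $v\in V$, form the positive clause
\[
C_v=\bigvee_{e\in E:\, v\in e} x_e .
\]
A loop at $v$ contributes its variable only to $C_v$. The formula is $\formula=\bigwedge_{v\in V} C_v$.

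\textbf{Occurrence bound and clause count.} A non-loop edge $e=\{u,v\}$ occurs only in $C_u$ and $C_v$, and a loop occurs in exactly one clause. Hence $\formula$ is a $2$-occ-CNF. It has exactly $m=n$ clauses, and it can be built in time polynomial in $|V|+|E|$.

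\textbf{Bijection.} Map an assignment $\model$ to the edge set $F_\model=\{e: \model(x_e)=1\}$. This map is a bijection between $\{0,1\}^E$ and subsets of $E$. Moreover, $\model$ satisfies $C_v$ if and only if some edge of $F_\model$ is incident to $v$. Therefore $\model\models\formula$ exactly when $F_\model$ is an edge cover, and the model count of $\formula$ equals the number of edge covers of $G$.

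\textbf{Isolated vertices.} If some vertex $v$ is isolated, then $C_v$ is the empty clause. The formula is then unsatisfiable, and correctly so, since $G$ has no edge cover. We can detect this up front and output $0$, or simply let the \#SAT algorithm handle the empty clause.

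\textbf{Invoking the theorem.} Run the algorithm of \Cref{thm:2-occ-cnf-complexity} on $\formula$. This takes $\bigOstar{2^{m\omega/3}}=\bigOstar{2^{n\omega/3}}$ time and exponential space. The polynomial factor in $|E|$ is absorbed by the $\mathcal{O}^*$ notation, since that notation suppresses factors polynomial in the input size. The numerical bound comes from $2^{2.3714/3}<1.7297$.

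There is no substantive obstacle here. The only point needing care is that the variable count $|E|$ may be unbounded relative to $n$, and this is harmless because the theorem's running time depends on $m$ only up to polynomial factors in the input size.
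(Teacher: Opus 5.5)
Your proposal is correct and follows the paper's own argument. The paper obtains the corollary from \Cref{thm:2-occ-cnf-complexity} using the same reduction: one variable per edge and one positive clause per vertex, with non-loop edges occurring in two clauses, loops in one, and $m=|V|$. Your two extra remarks are also sound and are points the paper leaves implicit: isolated vertices give empty clauses (the sum-product formula returns $0$ because the factor $\sum_{y_j}(-1)^{y_j}$ vanishes), and $|E|$ contributes only a polynomial factor absorbed by the $\mathcal{O}^*$ notation.
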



We further demonstrate that our algorithms (\Cref{thm:k-cnf-complexity} and \Cref{thm:2-occ-cnf-complexity}) generalize to the weighted counting setting while preserving the same runtime bounds.
In the weighted setting, each literal is associated with a weight; the weight of a truth assignment is the product of the weights of the literals it satisfies, and the weighted model count of a formula is the sum of these weights over all satisfying assignments.

Our hardness results are based on the following hypothesis.
\begin{hypothesis}[\#Max-3-SAT Hypothesis]\label{hyp:max-3-sat}
For every constant $\varepsilon>0$, \#Max-3-SAT on $n$ variables cannot be solved in $\bigOstar{(2-\varepsilon)^n}$ time.
\end{hypothesis}
Here, \#Max-$d$-SAT asks for the number of truth assignments that maximize the number of satisfied clauses in a CNF formula whose clauses contain at most $d$ literals.
We note that the stronger hypothesis that Max-3-SAT on $n$ variables cannot be solved in $\bigOstar{(2-\varepsilon)^n}$ time for any constant $\varepsilon>0$ has been used in the literature~\cite{conf/soda/LincolnWW18,conf/coco/BringmannFK19,conf/esa/JansenW20,conf/itcs/AlmanW20}.

We show that, for every fixed $d\ge3$ and constant $\varepsilon>0$, \#$d$-occ-SAT with $m$ clauses cannot be solved in $\bigOstar{(2-\varepsilon)^m}$ time, even on positive formulas, under the \#Max-3-SAT Hypothesis.
A formula is positive if it contains only positive literals.
To this end, we prove the following theorem.

\begin{restatable}{theorem}{MaxToOccReductionThm}\label{thm:max-to-occ}
For every fixed $d\ge2$, there is a polynomial-time non-adaptive Turing reduction from \#Max-$d$-SAT on $n$ variables to \#$d$-occ-SAT on positive formulas with exactly $n$ clauses.
Thus, for any constant $\gamma\ge1$, an $\bigOstar{\gamma^m}$-time algorithm for the latter problem with $m$ clauses yields an $\bigOstar{\gamma^n}$-time algorithm for the former problem, preserving polynomial space.
\end{restatable}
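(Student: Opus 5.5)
The plan is to ``transpose'' the instance: clauses of the \#Max-$d$-SAT formula $\formula$ become variables of the new formula, and variables of $\formula$ become clauses. Let $\formula$ have variables $x_1,\dots,x_n$ and clauses $C_1,\dots,C_{m'}$, each with at most $d$ literals. For each clause $C_j$ I introduce a variable $y_j$ (or a block of copies of it). For each $x_i$ I introduce one positive clause $D_i$ built from the $y_j$ with $C_j$ containing $x_i$ or $\nl{x_i}$. Each $y_j$ then occurs in at most $\abs{C_j}\le d$ clauses, so the target is a positive $d$-occ formula with exactly $n$ clauses. Everything then hinges on expressing the distribution of the number of satisfied clauses of $\formula$ through polynomially many model counts of such formulas.

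\textbf{Step 1 (algebraic identity).} For an indeterminate $z$ I would expand
\[
\sum_{a}(1+z)^{\#\mathrm{unsat}(a)}=\sum_{T\subseteq\{C_1,\dots,C_{m'}\}} z^{\abs{T}}\cdot\#\{a : \text{every } C\in T \text{ is falsified by } a\}.
\]
Let $P_i$ and $N_i$ be the clauses containing $x_i$ positively and negatively. The inner count factorises as $\prod_i f_i(T)$, where
\[
f_i(T)=\ind{T\cap P_i=\emptyset}+\ind{T\cap N_i=\emptyset}.
\]
Knowing the left-hand side as a polynomial in $z$ of degree at most $m'$ gives the number of assignments with each possible number of unsatisfied clauses, and hence the \#Max-$d$-SAT answer. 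It is recovered by interpolation from $m'+1$ evaluation points, which is where non-adaptivity comes from.

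\textbf{Step 2 (weights via copies).} To realise the factor $z^{\abs{T}}$ in an unweighted positive formula, I would replace $y_j$ by $r$ parallel copies placed in the same clauses. The event ``$C_j\in T$'' corresponds to at least one copy being true, which contributes $2^r-1$ models, while ``$C_j\notin T$'' contributes $1$. Copies do not raise the occurrence count of any single variable, so varying $r=1,\dots,m'+1$ yields distinct evaluation points $z=2^r-1$. The instance size stays polynomial since $r\le m'+1$.

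\textbf{Step 3 (the main obstacle).} The hard part is encoding $f_i$ with a single positive clause per variable. A positive clause $D_i$ only tests whether $T$ hits $P_i\cup N_i$, whereas $f_i$ must take value $2$ on ``hits neither'', $1$ on ``hits exactly one side'' and $0$ on ``hits both''. Expanding $f_i=2-\ind{\text{hit }P_i}-\ind{\text{hit }N_i}$ naively costs exponentially many queries, so that route fails.

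What I would try first is a polynomial-time preprocessing of $\formula$ that preserves the optimum structure, with an exact shift in the number of satisfied clauses. One option is to make each variable's negative occurrences trivial, for example by padding with tautology-like clause pairs of size at most $d$. This would turn $f_i$ into a function of a single hitting event, so that each $D_i$, possibly combined with multiplicities of copies used as a second interpolation parameter, produces $f_i$ up to a known affine transformation. Those transformations are then undone by a small linear system. I expect this step to absorb most of the work; the occurrence bound, the clause count $n$ and the polynomial space are immediate from the construction.

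Finally, once the reduction is in place, the running-time consequence in the theorem follows directly. Run the assumed $\bigOstar{\gamma^m}$ algorithm on each of the polynomially many queries, each having exactly $m=n$ clauses. Then interpolate in polynomial time and space, for a total of $\bigOstar{\gamma^n}$.
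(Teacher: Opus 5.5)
\textbf{There is a genuine gap in Step 3.} That step is where the whole proof lives, and the route you propose for it cannot work.

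Identify $x_i$ with $i$ and place copies of $y_j$ in the clauses $D_i$ for $x_i\in\var(C_j)$. Inclusion--exclusion over the clauses of your target $\Psi$ (writing $S=[n]\setminus R$, where $R$ indexes the clauses forced to be false) gives
\[
\#(\Psi)=(-1)^n\sum_{S\subseteq[n]}(-1)^{\abs{S}}\,2^{\,r\cdot\abs{\{j\,:\,\var(C_j)\subseteq S\}}}.
\]
So the only information about an assignment $S$ (its set of true variables) that reaches the count is the indicators $\ind{\var(C_j)\subseteq S}$, plus $\abs{S}$ if you add private variables to the $D_i$. These are exactly the falsity indicators of the all-negative clauses $\bigvee_{i\in\var(C_j)}\nl{x_i}$.

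Your fix is to preprocess $\formula$, keeping the same $n$ variables (you cannot add any, since they become the $n$ clauses of $\Psi$), into a formula in which every variable has a single polarity outside always-satisfied clauses, with an exact score shift. That is impossible in general. Such a formula has a score that is monotone in each variable, whereas $(x_1\vee x_2)\wedge(\nl{x_1}\vee\nl{x_2})$ has scores $1,2,2,1$ on $00,10,01,11$: increasing in $x_1$ when $x_2=0$ and decreasing when $x_2=1$.

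Even a preprocessing that merely preserved the number of optimal assignments would be too strong. The optimal assignments of a single-polarity formula are the models of a monotone $d$-CNF, which can be counted in $\bigOstar{2^{n(1-\Omega(1/d))}}$ time, so such a preprocessing would refute the \#Max-3-SAT Hypothesis.

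A ``small linear system'' over several queries does not help either. The falsity indicator of a general clause is the signed sum $\sum_{U\subseteq C^+}(-1)^{\abs{U}}\ind{C^-\cup U\subseteq S}$, where $C^+$ and $C^-$ index its positive and negative literals. The distribution of a difference $u_1(S)-u_2(S)$ is not determined by the separate distributions of $u_1$ and $u_2$; you need their joint distribution.

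\textbf{The missing idea, which is how the paper proceeds,} is to apply inclusion--exclusion on the side of $\Psi$. Subsets $S$ of its $n$ clauses are then identified with assignments of $\formula$, and a variable of $\Psi$ with occurrence set $A$ contributes the factor $2^{\ind{A\subseteq S}}$. The variables of $\Psi$ are indexed by the sets of the M\"obius expansion of the score rather than by clauses, and the signs are kept apart by a Kronecker-type substitution. Concretely:
\begin{enumerate}
\item Summing the clause expansions above gives $\score_\formula(S)=c+\sum_{A\ne\varnothing}a_A\ind{A\subseteq S}$ over sets with $\abs{A}\le d$.
\item Let $p(S)$ collect the positive terms and $q(S)$ the absolute values of the negative ones. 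Set $B:=1+\sum_{A\ne\varnothing}\abs{a_A}=O(m')$, so $0\le p,q<B$.
\item Take $ra_A$ variables with occurrence set $A$ for each $a_A>0$, $rB\abs{a_A}$ such variables for each $a_A<0$, and $rB^2$ private variables in each $D_i$.
\end{enumerate}
Then $\#(\Psi_r)=(-1)^n\sum_{S}(-1)^{\abs{S}}2^{r(p(S)+Bq(S)+B^2\abs{S})}$. This is the value at $T=2^r$ of a polynomial of degree below $B^2(n+1)$ whose exponents determine the triple $(p(S),q(S),\abs{S})$.

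Separating $q$ from $p$ means the subtraction $c+p-q$ happens only when decoding. Recording $\abs{S}$ means all terms sharing an exponent have the same sign, so nothing cancels. Interpolating from $r=1,\dots,B^2(n+1)$ then yields the optimum and the number of optimal assignments.

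Your transposed construction is essentially the special case in which every clause is all-negative. Your copies-as-weights and interpolation ideas carry over, but without splitting positive from negative coefficients into separate parts of the exponent, the reduction does not go through.
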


We note that \#$d$-occ-SAT on positive formulas parameterized by the number of clauses is equivalent to \#{\sc All Set Cover} with sets of size at most $d$ parameterized by the number of elements.
Here, the sets need not be distinct.
The reduction from \#{\sc All Set Cover} to \#SAT described above produces positive formulas; when each set has size at most $d$, each variable occurs in at most $d$ clauses.
Conversely, \#$d$-occ-SAT on positive formulas also naturally reduces to \#{\sc All Set Cover} with sets of size at most $d$: each clause corresponds to one element in the universe, and each variable corresponds to a subset of the universe containing precisely the elements corresponding to clauses containing that variable.
Each such subset has size at most $d$, and selecting the sets corresponding to variables assigned true gives a bijection between satisfying assignments and set covers.
Based on this equivalence, we have the following consequences of \Cref{thm:max-to-occ}.
\begin{corollary}\label{cor:edge-cover-to-max}
For every constant $\varepsilon>0$, an algorithm for \#{\sc All Edge Cover} on multigraphs with $N$ vertices running in $\bigOstar{(2-\varepsilon)^N}$ time and polynomial space yields an algorithm for \#Max-2-SAT on $n$ variables running in $\bigOstar{(2-\varepsilon)^n}$ time and polynomial space.
\end{corollary}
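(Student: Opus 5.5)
The corollary follows by composing \Cref{thm:max-to-occ} for $d=2$ with the equivalence between positive \#$2$-occ-SAT and \#{\sc All Set Cover} with sets of size at most two, which is exactly \#{\sc All Edge Cover} on multigraphs.

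\textbf{Step 1: turn the edge-cover algorithm into a positive \#$2$-occ-SAT algorithm.} Given a positive \#$2$-occ-SAT formula with $m$ clauses, build a multigraph with one vertex per clause. For each variable $x$, add one edge:
\begin{itemize}
\item a non-loop edge between the two clause-vertices if $x$ occurs in two distinct clauses;
\item a loop at the clause-vertex if $x$ occurs in only one clause.
\end{itemize}
Three corner cases need care.
\begin{itemize}
\item A variable occurring in no clause is a free factor of $2$, so it is dropped and the count is multiplied by $2$ at the end.
\item A variable occurring twice in the same clause behaves like a single occurrence, so it also becomes a loop.
\item An empty clause makes the count $0$ and is detected directly.
\end{itemize}
Setting $x$ to true corresponds to selecting its edge. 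A clause is satisfied iff its vertex is covered, since loops cover their vertex. So satisfying assignments are in bijection with edge covers, up to the factor $2^{\#\text{free vars}}$. The multigraph has $N=m$ vertices, so we get an $\bigOstar{(2-\varepsilon)^m}$-time, polynomial-space algorithm for positive \#$2$-occ-SAT.

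\textbf{Step 2: apply \Cref{thm:max-to-occ} with $d=2$ and $\gamma=2-\varepsilon$.} The theorem gives a polynomial-time non-adaptive Turing reduction from \#Max-$2$-SAT on $n$ variables to positive \#$2$-occ-SAT instances with exactly $n$ clauses. There are polynomially many oracle calls, each answered in $\bigOstar{(2-\varepsilon)^n}$ time and polynomial space by Step 1. Space can be reused across calls, since the reduction is non-adaptive and the answers are combined in polynomial time. The total is therefore $\bigOstar{(2-\varepsilon)^n}$ time and polynomial space, as claimed.

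The only genuine obstacle is the correctness of the bijection in Step 1, specifically the loop and multiplicity corner cases. The rest is bookkeeping of polynomial factors, which the $\bigOstar{\cdot}$ notation absorbs. No additional calculation appears necessary.
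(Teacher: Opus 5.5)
Your proposal is correct and takes the same route as the paper: the paper obtains this corollary directly from \Cref{thm:max-to-occ} with $d=2$, combined with the equivalence between positive \#$2$-occ-SAT and \#{\sc All Set Cover} with sets of size at most two, i.e., \#{\sc All Edge Cover} with singletons as loops. Your corner-case handling is more than the argument needs but harmless: every variable of the constructed $\Psi_r$ already occurs in one or two distinct clauses and every clause is nonempty, so each $\Psi_r$ is directly an edge-cover instance on $n$ vertices.
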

\begin{corollary}\label{cor:occ-conditional-lower-bound}
For every fixed $d\ge3$, \#{\sc All Set Cover} on a universe of $n$ elements with sets of size at most $d$ cannot be solved in $\bigOstar{(2-\varepsilon)^n}$ time for any constant $\varepsilon>0$, under the \#Max-3-SAT Hypothesis.
\end{corollary}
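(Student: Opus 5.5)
The corollary follows once we have \Cref{thm:max-to-occ} and the stated equivalence between \#$d$-occ-SAT on positive formulas and \#{\sc All Set Cover} with sets of size at most $d$. I would argue by contraposition.

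Fix $d\ge 3$ and suppose some algorithm $\mathcal A$ solves \#{\sc All Set Cover} on a universe of $n$ elements with sets of size at most $d$ in $\bigOstar{(2-\varepsilon)^n}$ time. The first step is to turn $\mathcal A$ into an algorithm for \#$d$-occ-SAT on positive formulas with $m$ clauses. Map each clause to a universe element and each variable $x$ to the set of clauses containing $x$. That set has size at most $d$, the universe has size exactly $m$, and the map is computable in polynomial time and parsimonious. This gives a $\bigOstar{(2-\varepsilon)^m}$-time algorithm for positive \#$d$-occ-SAT. One small check: repeated sets must be allowed, and variables occurring in no clause must be handled. Such a variable becomes an empty set, which \#{\sc All Set Cover} permits since sets need not be distinct or nonempty. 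Alternatively, one can multiply the count by $2$ for each such variable.

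The second step applies \Cref{thm:max-to-occ} with $\gamma=2-\varepsilon$. This yields an $\bigOstar{(2-\varepsilon)^n}$-time algorithm for \#Max-$d$-SAT on $n$ variables.

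The third step passes from \#Max-$d$-SAT down to \#Max-3-SAT. For $d\ge 3$, every 3-CNF is also a CNF with clauses of size at most $d$, so \#Max-3-SAT is literally a special case of \#Max-$d$-SAT on the same variable set. We therefore obtain an $\bigOstar{(2-\varepsilon)^n}$ algorithm for \#Max-3-SAT, contradicting \Cref{hyp:max-3-sat}.

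There is no real obstacle here; all the work is in \Cref{thm:max-to-occ}. The only points needing care are that the parameter is preserved exactly (universe size equals the number of clauses $m$, which equals the number of variables $n$ of the Max-SAT instance), and that the polynomial overhead of the non-adaptive Turing reduction is absorbed by the $\bigOstar{\cdot}$ notation.
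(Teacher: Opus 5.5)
Your proposal is correct and follows the paper's route. The corollary is obtained by composing \Cref{thm:max-to-occ} (with $\gamma=2-\varepsilon$) with the parsimonious, parameter-preserving correspondence between positive \#$d$-occ-SAT and \#{\sc All Set Cover} with sets of size at most $d$, together with the fact that \#Max-3-SAT is a special case of \#Max-$d$-SAT for $d\ge 3$. Your extra checks (empty or repeated sets, absorbing the polynomial overhead into $\bigOstar{\cdot}$) make explicit details that the paper leaves implicit.
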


\section{Preliminaries}

Let $V$ be a set of \emph{Boolean variables}. A Boolean variable $x \in V$ can be assigned a value of $1$ (\textsc{True}) or $0$ (\textsc{False}). A variable $x$ has two corresponding \emph{literals}: the positive literal $x$ and the negative literal $\nl{x}$. We use $\nl{\literal}$ to denote the negation of literal $\literal$, so $\nl{\nl{\literal}}=\literal$.
A \emph{clause} on $V$ is a set of literals on $V$, which might be empty. A \emph{CNF formula} $\formula$ over $V$ is a multiset of clauses on $V$, allowing identical clauses. Different copies of the same clause are treated as distinct.
Let $\var(\formula)$ denote the variable set of $\formula$. For a collection $S$ of clauses, we use $\ltr(S)$ to denote the set of literals appearing in the clauses of $S$ (i.e., $\literal \in \ltr(S)$ if $\literal \in C$ for some $C\in S$).

An \emph{assignment} for a variable set $V$ is a mapping $\model: V\rightarrow\{0, 1\}$. Given an assignment $\model$, a clause is \emph{satisfied} by $\model$ if at least one literal in it evaluates to $1$ under $\model$. An assignment for $\var(\formula)$ is called a \emph{satisfying assignment} (or \emph{model}) of $\formula$ if $\model$ satisfies all clauses in $\formula$. We denote by $\mathcal{M}(\formula)$ the set of all satisfying assignments of $\formula$, and let $\#(\formula) := |\mathcal{M}(\formula)|$.

Without loss of generality, we assume throughout that no clause contains both a variable and its negation. Such tautological clauses can be deleted while retaining the variable set, without changing
the set of satisfying assignments.
The \emph{degree} of a variable $x$, denoted by $\deg_{\formula}(x)$, is the number of its occurrences in $\formula$. A formula $\formula$ is called \emph{$k$-CNF} if each clause has length at most $k$, and called \emph{$d$-occ-CNF} if each variable appears in at most $d$ clauses.

Throughout, we use $\omega$ to denote the \emph{matrix multiplication exponent}, that is, a constant such that two $N\times N$ matrices can be multiplied using $\bigO{N^\omega}$ arithmetic operations. Known matrix multiplication algorithms allow us to choose $\omega<2.3714$~\cite{conf/soda/AlmanDWXXZ25}.

\section{An Algorithm for \#$k$-SAT via Clause Branching}\label{sec:k-CNF}
In this section, we present an $\bigOstar{2^{m(1-\Omega(1/k))}}$-time branching algorithm for \#$k$-SAT, where $m$ is the number of clauses. We first introduce some necessary notation.

For a literal $\literal$, let $\formula[\literal=1]$ denote the formula obtained by assigning $\literal = 1$ (removing clauses containing $\literal$ and deleting $\nl{\literal}$ from others), and let $\formula[\literal=0]:=\formula[\nl{\literal}=1]$. For a clause $C=(\literal_1\lor \cdots \lor \literal_k)$, we write $\formula[C=0] := \formula[\literal_1=0, \dots, \literal_k=0]$ to denote the formula obtained by falsifying $C$, and let $\formula \setminus \{C\}$ denote the formula obtained by removing (one copy of) $C$ from $\formula$.
Clause deletion retains the variable set, while assignment removes the assigned variables and retains all other variables.

The algorithm consists of only a simple clause-based branching rule and two reduction rules. We will first introduce these rules and then present the algorithm along with the analysis. 

\subsection{Clause Branching and Isolated Clause Reduction}

\begin{lemma}[clause branching]\label{lem:clause-branching}
    Let $C$ be a clause in formula $\formula$. Then
    \[
        \#(\formula) = \#(\formula\setminus \{C\}) - \#((\formula\setminus\{C\})[C=0]).
    \]
\end{lemma}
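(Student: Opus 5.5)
Write $\formula' := \formula\setminus\{C\}$ and let $V=\var(\formula)$. Clause deletion keeps the variable set, so both $\formula$ and $\formula'$ are counted over assignments of $V$. The natural route is a partition of the models of $\formula'$ by whether they satisfy $C$. Every model of $\formula$ is a model of $\formula'$. A model $\model$ of $\formula'$ is a model of $\formula$ exactly when it also satisfies $C$. Hence
\[
\mathcal{M}(\formula') = \mathcal{M}(\formula) \,\dot\cup\, \{\model\in\mathcal{M}(\formula') : \model \text{ falsifies } C\},
\]
and $\#(\formula) = \#(\formula') - \abs{\{\model\in\mathcal{M}(\formula') : \model(C)=0\}}$.

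The remaining step is to show that the second set has size $\#(\formula'[C=0])$. An assignment falsifies $C=(\literal_1\lor\cdots\lor\literal_k)$ iff it sets every $\literal_i$ to $0$. Since $C$ is not tautological, this pins down a unique value for each variable of $C$. The map $\model\mapsto\model|_{V\setminus\var(C)}$ is therefore a bijection from assignments of $V$ falsifying $C$ to assignments of $V\setminus\var(C)$.

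Under this bijection, $\model$ satisfies $\formula'$ iff its restriction satisfies $\formula'[C=0]$. The argument is a clause-by-clause check:
\begin{itemize}
\item A clause of $\formula'$ containing some $\nl{\literal_i}$ is satisfied by $\model$ and is removed in $\formula'[C=0]$.
\item For any other clause, the literals $\literal_i$ it contains evaluate to $0$ and are deleted. What remains is satisfied by the restriction iff the original clause is satisfied by $\model$.
\end{itemize}
This includes the case where the clause becomes empty, which is then unsatisfiable, consistent with the original clause being falsified.

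The only subtle point is bookkeeping of variable sets: $\formula'[C=0]$ must be over exactly $V\setminus\var(C)$. This follows from the conventions in the paper, namely that deletion retains variables and assignment removes only the assigned ones. Variables occurring only in $C$ therefore still count with a factor $2$ in $\#(\formula')$, and are correctly fixed in the falsified branch. Once this is checked, the identity follows.
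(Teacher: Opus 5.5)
Your proof is correct and follows the paper's argument: partition the models of $\formula\setminus\{C\}$ by whether they satisfy $C$, and identify those falsifying $C$ with the models of $(\formula\setminus\{C\})[C=0]$. You simply spell out the restriction bijection and the variable-set bookkeeping, which the paper leaves implicit.
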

\begin{proof}[Proof of \Cref{lem:clause-branching}]
    We partition the satisfying assignments of formula $\formula \setminus \{C\}$ into those satisfying $C$ (which are exactly the satisfying assignments of formula $\formula$) and those falsifying $C$. Since falsifying $C$ requires assigning $0$ to all its literals, the number of such assignments is exactly $\#((\formula \setminus \{C\})[C=0])$. Thus, we have $\#(\formula \setminus \{C\}) = \#(\formula) + \#((\formula \setminus \{C\})[C=0])$. Rearranging this equation yields the lemma.
\end{proof}

A clause $C$ is called \emph{isolated} if all variables in $C$ only appear once in the formula, i.e., $\deg_{\formula}(x)=1$ for all $x\in \var(C)$. Since it shares no variables with the rest of the formula, we can compute its satisfying assignments independently, leading to the following reduction rule.

\begin{lemma}[isolated clause reduction]\label{lem:isolated-clause}
    Let $C$ be an isolated clause in formula $\formula$. Let $\operatorname{rem}(\formula,C)$ denote the formula obtained by deleting $C$ and removing the variables in $\var(C)$ from the variable set. Then,
    \[
        \#(\formula) = \#(\operatorname{rem}(\formula,C))\cdot(2^{|C|} - 1).
    \]
\end{lemma}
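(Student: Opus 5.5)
The plan is to give a direct product-decomposition argument. Write $\formula' := \operatorname{rem}(\formula,C)$, and let $V := \var(\formula)$ and $U := \var(C)$. Then $\var(\formula') = V \setminus U$. Every assignment $\model$ of $V$ corresponds bijectively to a pair $(\model_1,\model_2)$, where $\model_1$ is the restriction of $\model$ to $V\setminus U$ and $\model_2$ is its restriction to $U$. I will show that $\model$ satisfies $\formula$ if and only if $\model_1$ satisfies $\formula'$ and $\model_2$ satisfies $C$. Counting pairs then gives $\#(\formula) = \#(\formula')\cdot N_C$, where $N_C$ is the number of assignments of $U$ satisfying $C$.

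For the equivalence, the key observation is that, since $C$ is isolated, $\deg_{\formula}(x)=1$ for every $x\in U$. Every occurrence of such an $x$ is therefore inside $C$, so no clause of $\formula\setminus\{C\}$ mentions a variable of $U$. As a result, whether $\model$ satisfies a clause $D\in\formula\setminus\{C\}=\formula'$ depends only on $\model_1$. Likewise, whether $\model$ satisfies $C$ depends only on $\model_2$. Since $\formula$ is the conjunction of $C$ with the clauses of $\formula'$, the equivalence follows.

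It remains to compute $N_C$. By the standing assumption there are no tautological clauses, and a literal occurring twice would give its variable degree at least $2$. Hence the literals of $C$ are on pairwise distinct variables, and $|U|=|C|$. Out of the $2^{|C|}$ assignments of $U$, exactly one falsifies $C$, namely the one setting every literal of $C$ to $0$. So $N_C = 2^{|C|}-1$.

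The only point needing care is this bookkeeping: distinct variables in $C$, no tautology, and that variable-set retention for $\formula'$ excludes exactly $U$. The edge case $C=\emptyset$ is consistent with the formula, giving the factor $0$.
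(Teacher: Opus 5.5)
Your proposal is correct and follows the paper's own argument. The satisfying assignments of $\formula$ factor as a Cartesian product of those of $C$ over $\var(C)$ and those of $\operatorname{rem}(\formula,C)$, and exactly one of the $2^{|C|}$ assignments to $\var(C)$ falsifies $C$. Your extra bookkeeping, namely $|\var(C)|=|C|$ and the empty-clause case, makes explicit what the paper leaves implicit.
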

\begin{proof}[Proof of \Cref{lem:isolated-clause}]
    Since $C$ is isolated, its variables do not appear in any other clause. Thus, the satisfying assignments of $\formula$ are the
    Cartesian product of the satisfying assignments of $C$ and those of $\operatorname{rem}(\formula,C)$.
    Exactly one of the $2^{|C|}$ assignments to $\var(C)$ falsifies $C$,
    so $C$ has $2^{|C|}-1$ satisfying assignments.
    This yields the claimed equality.
\end{proof}

\subsection{The Algorithm}\label{subsec:branching}
Let $V_{\ge 2}(\formula, C) := \{ x \in \var(C) \mid \deg_{\formula}(x) \ge 2 \}$. If $\formula$ contains no tautological clauses, these are precisely the variables in $C$ that also appear in at least one other clause in $\formula \setminus \{C\}$.
The algorithm, denoted as $\algcount(\formula)$, recursively computes $\#(\formula)$ by applying the following steps in decreasing order of priority. 

\begin{enumerate}[\textbf{Step \arabic*.}]
    \item \textbf{Base Cases:} 
    If $\formula$ contains the empty clause, return $0$. If $\formula = \emptyset$, return $2^{\abs{\var(\formula)}}$.

    \item \textbf{Tautological Clause Reduction:} If there exists a clause $C$ containing both $x$ and $\nl{x}$ for some variable $x$, return $\algcount(\formula\setminus\{C\})$, retaining the variable set.

    \item \textbf{Isolated Clause Reduction:} 
    If there exists an isolated clause $C \in \formula$ (i.e., $V_{\ge 2}(\formula, C) = \emptyset$), apply \Cref{lem:isolated-clause} and return $(2^{|C|} - 1) \cdot \algcount(\operatorname{rem}(\formula,C))$.
    
    \item \textbf{Clause Branching:} 
    Select a \emph{pivot clause} $C \in \formula$ that minimizes $|V_{\ge 2}(\formula, C)|$. Since Steps 2 and 3 were not triggered, we are guaranteed that $1 \le |V_{\ge 2}(\formula, C)| \le k$. Pick an arbitrary variable $x \in V_{\ge 2}(\formula, C)$, and choose a \emph{neighboring clause} $C' \in \formula \setminus \{C\}$ such that $x \in \var(C')$. Apply the clause branching rule (\Cref{lem:clause-branching}) on $C'$ and return
    $
        \algcount(\formula \setminus \{C'\}) - \algcount((\formula \setminus \{C'\})[C'=0])
    $.
\end{enumerate}

To bound the size of the search tree generated by $\algcount(\formula)$, we adopt the Measure and Conquer paradigm~\cite{journals/jacm/FominGK09}.
Let $m(\formula)$ denote the number of clauses in $\formula$.
The measure of a formula $\formula$ is defined as:
\[
    \mu(\formula) := m(\formula) + \frac{\eta(\formula)}{k}, \text{ where } \eta(\formula) := \min_{C \in \formula} |V_{\ge 2}(\formula, C)|
\]
with the convention that $\eta(\emptyset)=0$.
Intuitively, the fractional term $\eta(\formula)/k$ tracks the progress toward isolating a clause. Since $\eta(\formula) \le k$, we have $\mu(\formula) \le m(\formula) + 1$. Thus, any bound of $\bigOstar{c^\mu}$ for some $c$ implies $\bigOstar{c^m}$.

\begin{lemma}\label{lem:reduction-measure}
    Applying tautological clause reduction or isolated clause reduction does not increase the measure $\mu(\formula)$.
\end{lemma}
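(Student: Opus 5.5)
The plan is a direct comparison of the two terms of $\mu$ before and after the reduction. It does not need to track how individual degrees change. Let $\formula'$ be the formula produced by either reduction. In the tautological case $\formula'=\formula\setminus\{C\}$, and in the isolated case $\formula'=\operatorname{rem}(\formula,C)$. In both cases exactly one clause is deleted, so $m(\formula')=m(\formula)-1$. Removing variables from the variable set in the isolated case affects neither $m$ nor any $V_{\ge 2}(\cdot,\cdot)$, since those variables occur in no remaining clause.

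Next I would bound the fractional term. For any formula $\psi$ in which the algorithm runs, every clause has at most $k$ literals, so $|V_{\ge 2}(\psi,C)|\le |C|\le k$. This gives $0\le \eta(\psi)\le k$, using the convention $\eta(\emptyset)=0$ when $\psi'$ becomes empty. A tautological clause could in principle exceed $k$ distinct variables only if it were longer than $k$, which a $k$-CNF rules out. Hence $\eta(\formula')/k\le 1$ and $\eta(\formula)/k\ge 0$, and therefore
\[
\mu(\formula') = m(\formula)-1+\frac{\eta(\formula')}{k} \le m(\formula) \le m(\formula)+\frac{\eta(\formula)}{k}=\mu(\formula).
\]

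The only point needing care is that deleting a clause can increase $\eta$. Deleting $C$ can lower the degree of its variables and thereby change $V_{\ge 2}$ of other clauses. In fact it can only shrink those sets, so $\eta$ would not increase anyway. But the crude bound above already absorbs any change of $\eta$ within $[0,k]$, so I would not need to argue monotonicity of $\eta$ at all. I therefore expect no real obstacle: the loss of one full clause always pays for the at most one unit of fractional measure.
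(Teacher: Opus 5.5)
Your argument is correct and is essentially the paper's proof. Exactly one clause is deleted and $\eta$ always lies in $[0,k]$, so $\mu(\varphi')\le m(\varphi)\le\mu(\varphi)$; the paper uses $\eta(\varphi)=0$ in the isolated case, and your uniform bound $\eta(\varphi)\ge 0$ covers that. One caution: your aside that $\eta$ ``would not increase anyway'' is false. The sets $V_{\ge 2}$ of the remaining clauses can only shrink, but deleting the clause that attains the minimum can raise $\eta$: an isolated clause has $|V_{\ge 2}(\varphi,C)|=0$ while every other clause may have a positive value. This is why both the paper and your main line rely only on $\eta(\varphi')\le k$.
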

\begin{proof}[Proof of \Cref{lem:reduction-measure}]
    For tautological clause reduction, let $\formula'=\formula\setminus\{C\}$.
    Since $m(\formula')=m(\formula)-1$ and $\eta(\formula')\le k$,
    we have
    $
        \mu(\formula')\le m(\formula)\le\mu(\formula)
    $.
    
    For an isolated clause $C \in \formula$, $|V_{\ge 2}(\formula, C)| = 0$. By the definition of $\eta(\formula)$, we have $\eta(\formula) = 0$ and $\mu(\formula) = m(\formula)$. The reduction removes $C$ and $\var(C)$, yielding a residual formula $\formula' := \operatorname{rem}(\formula,C)$ with $m(\formula') = m(\formula) - 1$. Although the new minimum $\eta(\formula')$ might increase, it is bounded by $k$. Thus, the new measure is:
    \[
        \mu(\formula') = m(\formula') + \frac{\eta(\formula')}{k} \le m(\formula) - 1 + \frac{k}{k} = m(\formula) = \mu(\formula).
    \]
    Therefore, the measure never increases during this reduction.
\end{proof}

Next, we analyze how the measure $\mu(\formula)$ decreases in each recursive step. The key observation is that our pivot clause $C$ acts as a ``witness'' to prevent the fractional part of the measure from increasing after removing its neighboring clause $C'$, and even forces a decrease in the fractional part when falsifying $C'$.

\begin{lemma}\label{lem:measure-drop}
    In Step 4 of the algorithm, branching on $C'$ reduces the measure $\mu$ by at least $\Delta_1 \ge 1$ in the branch $\formula \setminus \{C'\}$ and by at least $\Delta_2 \ge 1 + 1/k$ in the branch $(\formula \setminus \{C'\})[C'=0]$.
\end{lemma}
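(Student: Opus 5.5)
Write $\eta := \eta(\formula) = |V_{\ge 2}(\formula, C)|$, so $\mu(\formula) = m(\formula) + \eta/k$ with $1 \le \eta \le k$. In each branch we count how many clauses are lost and then bound the new $\eta$ from above by exhibiting a witness clause. That witness will be the pivot $C$ itself, or what remains of it.

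\emph{Branch $\formula' := \formula \setminus \{C'\}$.} Here $m(\formula') = m(\formula) - 1$. The clause $C$ is still in $\formula'$, because $C' \neq C$. Deleting a clause never raises a variable's degree, so every variable of $C$ with degree at least $2$ in $\formula'$ already had degree at least $2$ in $\formula$. Hence $V_{\ge 2}(\formula', C) \subseteq V_{\ge 2}(\formula, C)$, and in fact $x$ may drop out of this set. This gives $\eta(\formula') \le \eta$, so $\mu(\formula') \le \mu(\formula) - 1$ and $\Delta_1 \ge 1$.

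\emph{Branch $\formula'' := (\formula \setminus \{C'\})[C'=0]$.} First remove $C'$, which costs one clause. Then assign all literals of $C'$ to $0$. Since $x \in \var(C')$, the variable $x$ is assigned.

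\textbf{Case 1:} the literal of $x$ in $C$ is set to $1$. Then $C$ is satisfied and removed, so $m(\formula'') \le m(\formula) - 2$. Because $\eta(\formula'') \le k$, we get $\mu(\formula'') \le m(\formula) - 2 + 1 \le \mu(\formula) - 1 - \eta/k$. Using $\eta \ge 1$, the drop is at least $1 + 1/k$.

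\textbf{Case 2:} the literal of $x$ in $C$ is set to $0$. Then $C$ becomes a shrunken clause $C^\ast = C \setminus \ltr(C')$, which stays in $\formula''$ unless other clauses are also satisfied and removed. The assignment only deletes variables and occurrences, so degrees do not increase. Also, $x \in V_{\ge 2}(\formula, C)$ is no longer a variable. Therefore $V_{\ge 2}(\formula'', C^\ast) \subseteq V_{\ge 2}(\formula, C) \setminus \{x\}$, which has size at most $\eta - 1$. This gives $\eta(\formula'') \le \eta - 1$ and $m(\formula'') \le m(\formula) - 1$, so the drop is at least $1 + 1/k$.

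Two edge cases need checking. If $C^\ast$ is empty, then $\eta(\formula'') = 0$, which is still at most $\eta - 1$, so the bound holds; the base case then fires anyway. If $\formula''$ is empty, then by convention $\eta(\formula'') = 0$, and the bound again holds.

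The main obstacle is this second branch. We must make sure the witness clause survives, and that its $V_{\ge 2}$ set only shrinks, even though the simplification can remove other clauses. The degree-monotonicity observation covers both points. Tautological clauses are excluded at this stage, so $C$ and $C'$ are both well formed.
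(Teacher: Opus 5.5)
Your argument is correct and is essentially the paper's proof: the pivot $C$ (or its residue $C^\ast$) is the witness, degree monotonicity gives $\eta(\varphi')\le\eta(\varphi)$ and $\eta(\varphi'')\le\eta(\varphi)-1$, and the case where $C$ is satisfied is absorbed by $m$ dropping by $2$ together with $\eta(\varphi'')\le k$. The one slip is that you should split the cases on whether $C$ is satisfied, not on the value of $x$'s literal, because $C$ can also be satisfied through another shared variable of opposite polarity while $x$'s literal in $C$ is falsified (your Case~1 computation covers that situation unchanged).
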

\begin{proof}[Proof of \Cref{lem:measure-drop}]
    Let $C$ be the pivot clause chosen in Step 4, with $|V_{\ge 2}(\formula, C)| = \eta(\formula)\ge 1$.
    \begin{itemize}
        \item \textbf{Branch 1 (removing $C'$):} The formula becomes $\formula' := \formula \setminus \{C'\}$. The number of clauses decreases by 1, so $m(\formula') = m(\formula) - 1$. Since $C' \neq C$, the pivot clause $C$ remains in $\formula'$. Removing $C'$ can only decrease or maintain the degrees of the remaining variables, so  $\eta(\formula') \le \eta(\formula)$. The total measure drop is:
        \[
            \Delta_1 := \left(m(\formula) + \frac{\eta(\formula)}{k}\right) - \left(m(\formula') + \frac{\eta(\formula')}{k}\right) = 1 + \frac{\eta(\formula) - \eta(\formula')}{k} \ge 1.
        \]
        
        \item \textbf{Branch 2 (falsifying $C'$):} The formula becomes $\formula'' := (\formula \setminus \{C'\})[C'=0]$. The clause count $m$ decreases by at least 1. Furthermore, since $x \in \var(C) \cap \var(C')$, falsifying $C'$ assigns a truth value to $x$. This removes $x$ from $C$, which implies $|V_{\ge 2}(\formula'', C)| \le |V_{\ge 2}(\formula, C)| - 1$. Thus, $\eta(\formula'') \le \eta(\formula) - 1$ and the measure drop is:
        \[
            \Delta_2 := \left(m(\formula) + \frac{\eta(\formula)}{k}\right) - \left(m(\formula'') + \frac{\eta(\formula'')}{k}\right) \ge 1 + \frac{\eta(\formula) - \eta(\formula'')}{k} \ge 1 + \frac{1}{k}.
        \]
    \end{itemize}
    Note that if $C$ happens to be satisfied and hence removed as a side effect in the second branch, $m$ decreases by at least 2. Even if $\eta(\formula)$  could increase from its minimum possible value of $1$ to the maximum possible value of $k$, the measure drop is at least $2 + \frac{1-k}{k} = 1 + \frac{1}{k}$. Thus, the claimed bounds hold in all cases.
\end{proof}

With \Cref{lem:reduction-measure} and \Cref{lem:measure-drop}, we are ready to prove \Cref{thm:k-cnf-complexity}.

\kCnfComplexityThm*
\begin{proof}[Proof of \Cref{thm:k-cnf-complexity}]
    Tautological clause reduction preserves the set of satisfying assignments. Together with \Cref{lem:isolated-clause} and \Cref{lem:clause-branching}, the correctness of $\algcount(\formula)$ is guaranteed.
    Clearly, the algorithm takes polynomial space.
    Let $T(\mu)$ denote the maximum number of leaves in the search tree of $\algcount(\formula)$ for a formula $\formula$ of measure at most $\mu$.
    Both reduction rules can be performed in polynomial time, and they do not increase the measure by \Cref{lem:reduction-measure}.
    By \Cref{lem:measure-drop}, the two subinstances created by clause branching have measure decreases of at least $1$ and $1+1/k$, respectively. Hence, at branching nodes, we have the following recurrence:
    \[
        T(\mu) \le T(\mu - 1) + T(\mu - 1 - 1/k).
    \]
    Since every measure is an integer multiple of $1/k$, we prove by induction on $k\mu$ that $T(\mu) \le 2^{\mu(1-\epsilon)}$ for $\epsilon := \frac{1}{5k}$. Note that this value is chosen for the simplicity of the proof, rather than optimizing the bound; we believe that the constant factor in $\epsilon$ can be further optimized via a more careful analysis.
    The bound trivially holds for the base cases.
    Assuming the bound holds for all values smaller than $\mu$, substituting it into the recurrence gives:
    \[
        T(\mu) \le 2^{(\mu - 1)(1 - \frac{1}{5k})} + 2^{(\mu - 1 - \frac{1}{k})(1 - \frac{1}{5k})}.
    \]
    To complete the inductive step, the right-hand side must be at most $2^{\mu(1 - \frac{1}{5k})}$. Dividing this inequality by $2^{\mu(1 - \frac{1}{5k})}$ yields the condition:
    \[
        2^{-(1 - \frac{1}{5k})} + 2^{-(1 + \frac{1}{k})(1 - \frac{1}{5k})} \le 1.
    \]
    Multiplying both sides by $2$, we obtain the equivalent condition:
    \[
        2^{\frac{1}{5k}} + 2^{\frac{1}{5k} - \frac{1}{k}(1 - \frac{1}{5k})} \le 2  \iff 2^{\frac{1}{5k}} \left( 1 + 2^{-\frac{5k - 1}{5k^2}} \right) \le 2.
    \]
    To verify this inequality, we apply the bounds $2^y \le 1 + y$ and $1 + 2^{-y} \le 2 - y/2$, which hold for $y \in [0, 1]$. Since $k \ge 1$, the exponents $y_1 := \frac{1}{5k}$ and $y_2 := \frac{5k - 1}{5k^2}$ are in $(0, 1]$. Applying these bounds gives:
    \begin{align*}
        2^{\frac{1}{5k}} \left( 1 + 2^{-\frac{5k - 1}{5k^2}} \right) 
        &\le \left( 1 + \frac{1}{5k} \right) \left( 2 - \frac{5k - 1}{10k^2} \right) \\
        &= 2 - \frac{5k - 1}{10k^2} + \frac{2}{5k} - \frac{5k - 1}{50k^3} \\
        &= 2 - \frac{k - 1}{10k^2} - \frac{5k - 1}{50k^3} < 2 \quad \text{ (for any integer $k\ge 1$).}
    \end{align*}
    This completes the induction. The total running time is bounded by $O^*(2^{\mu(1 - \frac{1}{5k})})$, which establishes the claimed $O^*(2^{m(1 - \frac{1}{5k})})$ bound since $\mu(\formula) \le m(\formula) + 1$.
\end{proof}

\section{An Algorithm for \#$2$-occ-SAT via Split-and-List}\label{sec:2-occ-CNF}
In this section, we present an $\bigOstar{2^{m\omega /3}}$-time algorithm for \#2-occ-SAT, where $m$ is the number of clauses. 
Our approach builds upon the standard inclusion-exclusion formulation for \#SAT~\cite{journals/siamcomp/Iwama89,journals/ipl/Lozinskii92}. One can evaluate this formulation by enumerating all $2^m$ subsets of clauses, taking $\bigOstar{2^m}$ time. To avoid this, we employ the \emph{Split-and-List} method~\cite{journals/tcs/Williams05,journals/ipl/Koivisto06}. For convenience and clarity, we will first recast the subset-based inclusion-exclusion formulation into an equivalent sum-product form, and then apply Split-and-List to evaluate it. In a sum-product problem, the objective is to evaluate a multi-dimensional sum over a product of lower-dimensional local functions. As we will show, for a 2-occ-CNF, the local functions in the formulation have at most two dependencies. This property is precisely what enables a more efficient evaluation via Split-and-List.

\subsection{From Inclusion-Exclusion Formulation to Sum-Product Formulation}
We begin by recalling the inclusion-exclusion formulation for \#SAT~\cite{journals/siamcomp/Iwama89,journals/ipl/Lozinskii92}.
For a CNF formula $\formula=\{C_1,C_2,\dots,C_m\}$ over $n$ variables, let $A_i$ be the set of assignments that \emph{falsify} clause $C_i$. The number of satisfying assignments is given by $\#(\formula) = 2^n - |\bigcup_{i=1}^{m}A_i|$. 
By the inclusion-exclusion principle, the cardinality of this union can be explicitly expanded as:
$\left|\bigcup_{i=1}^{m}A_i\right| = \sum_{i=1}^{m}|A_i| - \sum_{1\le i_1 < i_2 \le m} |A_{i_1}\cap A_{i_2}| +  \cdots + (-1)^{m-1}\left|A_1\cap A_2\cap \cdots \cap A_m\right|$.

To express this more compactly, let $N(S)$ denote the number of assignments that simultaneously falsify every clause in a given subset $S \subseteq \formula$. We can then rewrite the formulation as a summation over all subsets of clauses:
\begin{equation} \label{eq:ie-formulation}
    \#(\formula) = \sum_{S \subseteq \formula} (-1)^{|S|} N(S) \text{ where } N(S) := \begin{cases} 
    0 & \text{if } \exists x\in \var(\formula), \{x, \nl{x}\} \subseteq \ltr(S) \\
    2^{n-\abs{\ltr(S)}} & \text{otherwise}
    \end{cases}.
\end{equation}

We then translate Eq.~\eqref{eq:ie-formulation} to an equivalent sum-product formulation.
To this end, we introduce a vector of Boolean indicator variables $y := (y_1, y_2, \dots, y_m) \in \{0, 1\}^m$, where its $j$-th bit $y_j$ indicates whether the clause $C_j$ is included in the subset $S$ (i.e., $y_j = 1 \iff C_j \in S$). The sign term $(-1)^{|S|}$ naturally factorizes as $\prod_{j=1}^m (-1)^{y_j}$. 

Crucially, the term $N(S)$ factorizes into the product of individual contributions from the $n$ variables. Recall that every assignment in $N(S)$ should simultaneously falsify all clauses in $S$, i.e., setting all literals in $\ltr(S)$ to false. Thus, for a variable $x_i$, the number of valid Boolean assignments to meet this condition depends only on how it appears in the selected clauses $S$, independent of all other variables. Specifically, its contribution is $0$ if both its positive and negative literals appear in $S$ (a logical conflict), $1$ if exactly one positive literal or exactly one negative literal appears (fixing its truth value), and $2$ if it does not appear in $S$ at all (leaving it free). 

We introduce some notations before formally presenting the formulation. Let $\formula = \{C_1, C_2, \dots, C_m\}$ be a $2$-occ-CNF over $n$ variables $V = \{x_1, x_2, \dots, x_n\}$. Recall that for a variable $x_i\in V$, $\deg_{\formula}(x_i)$ denotes the number of times it appears in $\formula$. We assume without loss of generality that $\deg_{\formula}(x_i) \in \{1, 2\}$ for all $x_i$, as variables with $\deg_{\formula}(x_i)=0$ can be trivially factored out. We partition the variable set $V$ into two disjoint subsets:
\begin{itemize}
    \item $V_1 := \{x_i \in V \mid \deg_{\formula}(x_i) = 1\}$: the set of variables appearing exactly once. For each $x_i \in V_1$, let $c_1(i)$ denote the index of the unique clause containing $x_i$.
    \item $V_2 := \{x_i \in V \mid \deg_{\formula}(x_i) = 2\}$: the set of variables appearing exactly twice. For each $x_i \in V_2$, let $c_1(i)$ and $c_2(i)$ denote the indices of the two clauses containing $x_i$, respectively.
\end{itemize}
We say a variable $x_i\in V$ is \emph{mixed} if it has both positive and negative literals in $\formula$; otherwise, $x_i$ is \emph{pure}. 
The sum-product formulation is as follows.

\begin{lemma}[sum-product formulation] \label{lem:sum_product}
    For any 2-occ-CNF $\formula$, it holds that
    \begin{equation} \label{eq:sum_product}
        \#(\formula) = \sum_{y \in \{0, 1\}^m} \left( \prod_{j=1}^m (-1)^{y_j} \prod_{x_i \in V_1} U_i(y_{c_1(i)}) \prod_{x_i \in V_2} W_i(y_{c_1(i)}, y_{c_2(i)}) \right),
    \end{equation}
    where $U_i: \{0, 1\} \to \{1, 2\}$ and $W_i: \{0, 1\}^2 \to \{0, 1, 2\}$ are local weight functions mapping the selection status of the clauses containing variable $x_i$ to the number of valid Boolean assignments of $x_i$.
    Specifically, these functions are defined as follows:
    \begin{itemize}
        \item For any $x_i \in V_1$, $U_i(y_{c_1(i)}) := 2 - y_{c_1(i)}$.
        \item For any $x_i \in V_2$, 
        $W_i(y_{c_1(i)}, y_{c_2(i)}) := 
        \begin{cases}
            2 - \max(y_{c_1(i)}, y_{c_2(i)}) & \text{if } x_i \text{ is pure}, \\
            2 - (y_{c_1(i)} + y_{c_2(i)}) & \text{if } x_i \text{ is mixed}.
        \end{cases}$
    \end{itemize}
\end{lemma}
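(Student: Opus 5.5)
The plan is to start from the inclusion--exclusion identity, Eq.~\eqref{eq:ie-formulation}, and rewrite each summand as a product of local factors. We identify each subset $S\subseteq\formula$ with its indicator vector $y\in\{0,1\}^m$, which gives a bijection between subsets and $\{0,1\}^m$. Under this bijection the sign becomes $(-1)^{|S|}=\prod_{j}(-1)^{y_j}$. It then remains to show, for every fixed $y$, that
\[
N(S)=\prod_{x_i\in V_1}U_i(y_{c_1(i)})\prod_{x_i\in V_2}W_i(y_{c_1(i)},y_{c_2(i)}).
\]

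The first step is to factor $N(S)$ over the variables. By definition, $N(S)$ counts assignments that set every literal in $\ltr(S)$ to $0$. That constraint is a conjunction of constraints each involving a single variable, so $N(S)=\prod_{i=1}^n N_i(S)$. Here $N_i(S)\in\{0,1,2\}$ is the number of values of $x_i$ that falsify all occurrences of $x_i$ in clauses of $S$:
\begin{itemize}
\item $N_i(S)=2$ if $x_i$ occurs in no clause of $S$;
\item $N_i(S)=1$ if only one polarity of $x_i$ occurs in $S$;
\item $N_i(S)=0$ if both polarities occur.
\end{itemize}
This agrees with the two-case definition of $N(S)$ in Eq.~\eqref{eq:ie-formulation}: if some variable has both polarities the product is $0$, and otherwise it equals $2^{\#\{i:\,x_i \text{ untouched}\}}=2^{n-|\ltr(S)|}$. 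The agreement uses that $|\ltr(S)|$ equals the number of touched variables when there is no conflict. Variables of degree $0$ always contribute a factor of $2$ and were factored out by assumption. We also use the standing assumption that there are no tautological clauses, so the two occurrences of a $V_2$ variable lie in two distinct clauses $c_1(i)\ne c_2(i)$, and $W_i$ is a genuine function of two distinct coordinates.

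The second step is a case check that $N_i$ agrees with the stated local functions.
\begin{itemize}
\item For $x_i\in V_1$: $x_i$ is touched iff $y_{c_1(i)}=1$, so $N_i=2-y_{c_1(i)}=U_i$.
\item For $x_i\in V_2$ pure: both occurrences have the same polarity, so $N_i=2$ if neither clause is selected and $N_i=1$ otherwise. This equals $2-\max(y_{c_1(i)},y_{c_2(i)})$.
\item For $x_i\in V_2$ mixed: the two clauses carry opposite literals. Selecting none gives $2$, selecting exactly one gives $1$, and selecting both gives a conflict and hence $0$. This equals $2-(y_{c_1(i)}+y_{c_2(i)})$.
\end{itemize}

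Substituting these factors into Eq.~\eqref{eq:ie-formulation} yields Eq.~\eqref{eq:sum_product}. No step is genuinely hard. The only point requiring care is the bookkeeping that the per-variable product reproduces $N(S)$ exactly, including the zero case and the exponent $n-|\ltr(S)|$.
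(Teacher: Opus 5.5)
Your proposal is correct and follows essentially the same route as the paper. Both identify $S$ with its indicator vector, factor the sign and $N(S)$ into per-clause and per-variable terms, and verify $U_i$ and $W_i$ case by case (single occurrence, pure, mixed). Your explicit check that the per-variable product reproduces the two-case definition of $N(S)$, including the conflict case and the exponent $n-|\ltr(S)|$, is slightly more careful than the paper, which only asserts the factorization.
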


\begin{proof}[Proof of \Cref{lem:sum_product}]
    By substituting the indicator vector $y \in \{0, 1\}^m$ into Eq.~\eqref{eq:ie-formulation}, the summation over subsets $S$ transforms into a summation over $y$, and the term $(-1)^{|S|}$ transforms to $\prod_{j=1}^m (-1)^{y_j}$. As established, $N(S)$ factorizes into independent valid assignment counts for each variable. It suffices to verify that $U_i$ and $W_i$ correctly compute these counts for variables in $V_1$ and $V_2$, respectively.
    
    For a variable $x_i \in V_1$, the unique clause containing $x_i$ is selected if and only if $y_{c_1(i)} = 1$. If $y_{c_1(i)} = 0$, $x_i$ is free and has $2$ valid assignments; if $y_{c_1(i)} = 1$, its truth value is fixed to falsify the clause and has $1$ valid assignment. The expression $2 - y_{c_1(i)}$ matches these counts.
    
    For a variable $x_i \in V_2$, let $b_1 := y_{c_1(i)}$ and $b_2 := y_{c_2(i)}$ denote the Boolean selection indicators of its two clauses. We verify the soundness of $W_i(b_1, b_2)$ for each type of $x_i$:
    \begin{itemize}
        \item {$x_i$ is pure.} The two literals of $x_i$ share the same polarity. Then, $x_i$ has $1$ valid assignment if at least one of its clauses is selected (i.e., $\max(b_1, b_2) = 1$), and has $2$ valid assignments only if neither is selected (i.e., $b_1=b_2=0$). The expression $2 - \max(b_1, b_2)$ matches this.
        
        \item {$x_i$ is mixed.} The two literals of $x_i$ have opposite polarities. Selecting both clauses ($b_1=b_2=1$) leads to a conflict, leaving $0$ valid assignments. 
        Selecting exactly one clause ($b_1+b_2=1$) leaves $1$ valid assignment, and selecting neither ($b_1=b_2=0$) leaves $2$ valid assignments. The expression $2 - (b_1 + b_2)$ matches these counts.
    \end{itemize}
    Taking the product of these local functions over all variables reconstructs $N(S)$, which completes the proof.
\end{proof}

\subsection{Fast Solving the Sum-Product Problem}
To evaluate our sum-product formulation, we employ the {Split and List} method, whose foundational idea was first introduced by Williams~\cite{journals/tcs/Williams05}. This idea was subsequently adapted by Koivisto~\cite{journals/ipl/Koivisto06} to solve a class of sum-product problems. For completeness and clarity, we provide a concise, self-contained proof that applies Koivisto's approach in \cite{journals/ipl/Koivisto06} to our specific formulation. At a high level, we partition the $m$ clauses into three equal groups and list the $2^{m/3}$ indicator states for each. Because each variable in a $2$-occ-CNF appears in at most two clauses, its local weight depends on at most two of these groups. This allows us to factorize the inner product into three pairwise components and reduce the entire multidimensional summation to computing the trace of the product of three matrices.

\begin{lemma}\label{lem:sum-product-compt}
    The sum-product formulation given in Eq.~\eqref{eq:sum_product} can be evaluated in $O^*(2^{\omega m / 3})$ time and $O^*(2^{2m / 3})$ space.
\end{lemma}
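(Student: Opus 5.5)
We will evaluate Eq.~\eqref{eq:sum_product} with Koivisto's three-way Split-and-List. First pad $m$ to a multiple of $3$ by adding dummy indicator coordinates that no factor depends on; each dummy contributes a factor $\sum_{y}(-1)^{y}=0$, so instead we simply let the three groups have sizes $\lceil m/3\rceil$ or $\lfloor m/3\rfloor$, which changes the running time only by a constant factor. Partition the clause indices $\{1,\dots,m\}$ into three blocks $B_1,B_2,B_3$ of these sizes, and write $y=(a,b,c)$ with $a\in\{0,1\}^{B_1}$, $b\in\{0,1\}^{B_2}$, $c\in\{0,1\}^{B_3}$.

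Next, assign every factor of the product to one of the pairs $(1,2),(2,3),(3,1)$. The sign $(-1)^{y_j}$ and each $U_i(y_{c_1(i)})$ depend on a single block, so attach them to any pair containing that block; fix the convention that block-$1$ factors go to $(1,2)$, block-$2$ factors to $(2,3)$ and block-$3$ factors to $(3,1)$. Each $W_i(y_{c_1(i)},y_{c_2(i)})$ depends on at most two blocks, since a $2$-occ variable lies in at most two clauses; attach it to the pair containing both blocks, or to the single-block pair by the same convention if both clauses lie in one block. Define the three matrices
\[
M_{12}[a,b]:=\prod(\text{factors attached to }(1,2)),\quad M_{23}[b,c],\quad M_{31}[c,a],
\]
each of dimension $2^{m/3}\times 2^{m/3}$. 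Since every factor is attached to exactly one pair, the summand of Eq.~\eqref{eq:sum_product} equals $M_{12}[a,b]\,M_{23}[b,c]\,M_{31}[c,a]$. Summing over $(a,b,c)$ gives
\[
\#(\formula)=\operatorname{tr}(M_{12}M_{23}M_{31}).
\]

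Each matrix entry is a product of polynomially many factors and is computable in polynomial time, so building the three matrices costs $O^*(2^{2m/3})$ time and space. Two fast $N\times N$ multiplications with $N=2^{m/3}$ cost $O(N^\omega)$ arithmetic operations, and the trace is computed in $O(N)$ additional time. The entries are integers of absolute value at most $2^{n}$, and intermediate sums have magnitude at most $2^{n+m}$, so their bit length is polynomial and each arithmetic operation costs polynomial time. Since $\omega\ge2$, the total time is $O^*(2^{\omega m/3})$. Only the three input matrices and the two products are stored, which uses $O^*(2^{2m/3})$ space.

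The only delicate step is the factor-assignment bookkeeping: every local function must be attached to exactly one matrix, so that nothing is counted twice or dropped. This requires the 2-occ property, which guarantees that no factor touches all three blocks.
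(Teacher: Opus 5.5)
Your proposal is correct and follows essentially the same route as the paper: split the clause indices into three blocks, use the 2-occ property to attach each sign and each local factor to exactly one of three pairwise functions, and obtain $\#(\formula)$ as the trace of a product of three $2^{m/3}\times 2^{m/3}$ matrices computed with fast matrix multiplication. The differences are only cosmetic. The paper pads $m$ to a multiple of $3$ with unit clauses on fresh variables, each contributing $\sum_{y}(-1)^y(2-y)=1$ and so avoiding the zero factor you correctly noticed for bare dummy coordinates, and it extends unary factors to binary ones. You instead use blocks of size $\lceil m/3\rceil$ or $\lfloor m/3\rfloor$, which costs only a constant factor.
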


\begin{proof}[Proof of \Cref{lem:sum-product-compt}]
    For convenience, we extend unary local weight functions to binary. For each variable $x_i \in V_1$, let $c_2(i) := c_1(i)$ and $W_i(y_{c_1(i)}, y_{c_2(i)}) := U_i(y_{c_1(i)})$. 
    Eq.~\eqref{eq:sum_product} becomes:
    \begin{equation} \label{eq:unified_sp}
        \#(\formula) = \sum_{y \in \{0, 1\}^m} \left( \prod_{j=1}^m (-1)^{y_j} \prod_{i=1}^n W_i(y_{c_1(i)}, y_{c_2(i)}) \right).
    \end{equation}

    Assume without loss of generality that $m$ is divisible by $3$ (padding with at most two unit clauses on fresh variables if necessary). We partition the clause indices $\{1, \dots, m\}$ into three disjoint subsets $I, J$, and $K$, each of size $m/3$. For any vector $y \in \{0, 1\}^m$, let $y_I, y_J, y_K \in \{0, 1\}^{m/3}$ denote its projections onto the coordinates in $I, J$, and $K$, respectively. 

    Let $E_i := \{c_1(i), c_2(i)\} \subseteq \{1, \dots, m\}$ denote the set of clause indices associated with $x_i$. Since $|E_i| \le 2$ for every $x_i \in V$, no set $E_i$ intersects all three subsets $I, J$, and $K$. This bounded dependency allows us to partition the variable set $V$ as follows:
    \begin{itemize}
        \item $X_{I,J} := \{x_i \in V \mid E_i \subseteq I \cup J\}$,
        \item $X_{J,K} := \{x_i \in V \mid E_i \subseteq J \cup K\} \setminus X_{I,J}$,
        \item $X_{K,I} := \{x_i \in V \mid E_i \subseteq K \cup I\} \setminus (X_{I,J} \cup X_{J,K})$.
    \end{itemize}
    For each $x_i\in X_{P,Q}$, where $(P,Q)\in\{(I,J),(J,K),(K,I)\}$, both clause indices $c_1(i)$ and $c_2(i)$ lie in $P\cup Q$. Consequently, its local weight $W_i(y_{c_1(i)},y_{c_2(i)})$ depends only on the sub-vectors $y_P$ and $y_Q$.
    
    We distribute the terms in Eq.~\eqref{eq:unified_sp} into three pairwise functions:
    \begin{align*}
        F_{I,J}(y_I, y_J) &:= \left( \prod_{j \in I} (-1)^{y_j} \right) \prod_{x_i \in X_{I,J}} W_i(y_{c_1(i)}, y_{c_2(i)}), \\
        F_{J,K}(y_J, y_K) &:= \left( \prod_{j \in J} (-1)^{y_j} \right) \prod_{x_i \in X_{J,K}} W_i(y_{c_1(i)}, y_{c_2(i)}), \\
        F_{K,I}(y_K, y_I) &:= \left( \prod_{j \in K} (-1)^{y_j} \right) \prod_{x_i \in X_{K,I}} W_i(y_{c_1(i)}, y_{c_2(i)}). 
    \end{align*}
    Multiplying these exactly reconstructs the summand. Thus, the formulation factorizes as:
    \begin{align} \label{eq:factorized}
        \#(\formula) 
        &= \sum_{y \in \{0, 1\}^m} \left( \prod_{j=1}^m (-1)^{y_j} \prod_{i=1}^n W_i(y_{c_1(i)}, y_{c_2(i)}) \right) \nonumber \\
        &= \sum_{y_I} \sum_{y_J} \sum_{y_K} F_{I,J}(y_I, y_J) \cdot F_{J,K}(y_J, y_K) \cdot F_{K,I}(y_K, y_I).
    \end{align}

    We define three $2^{m/3} \times 2^{m/3}$ matrices $A, B$, and $C$ with entries $A[y_I, y_J] := F_{I,J}(y_I, y_J)$, $B[y_J, y_K] := F_{J,K}(y_J, y_K)$, and $C[y_K, y_I] := F_{K,I}(y_K, y_I)$. Recall that the trace of a matrix $M$, denoted by $\textrm{Tr}(M)$, is the sum of its diagonal entries. By the definition of matrix multiplication, Eq.~\eqref{eq:factorized} equals the trace of the product of these three matrices:
    \begin{equation}
        \#(\formula) = \sum_{y_I} (A B C)[y_I, y_I] = \textrm{Tr}(A B C).
    \end{equation}

    We then analyze the time and space required by the above approach. Generating the matrices requires evaluating the pairwise functions for all $2^{2m/3}$ sub-vector pairs, taking $\bigOstar{2^{2m/3}}$ time and space. Computing the product $ABC$ via fast matrix multiplication takes $\bigOstar{(2^{m/3})^\omega} = \bigOstar{2^{m \omega / 3}}$ time and $\bigOstar{2^{2m/3}}$ space, which completes the proof.
\end{proof}

Combining \Cref{lem:sum_product} and \Cref{lem:sum-product-compt}, we have:
\TwoOccCNFComplexityThm*

\section{Generalizing to Weighted Model Counting}
In this section, we demonstrate that our algorithms for \#$k$-SAT and \#$2$-occ-SAT can be naturally extended to the weighted setting with the same time complexity.
\begin{definition}[Weighted Model Count]
    Let $\ltr(\formula):=\bigcup_{x\in\var(\formula)}\{x,\nl{x}\}$ be the set of literals over $\var(\formula)$.
    Let $\wfunc:\ltr(\formula)\rightarrow \mathbb{Z}$ be a \emph{weight function} that assigns an integer weight (possibly negative) to each literal. Let $\mathcal{M}(\formula)$ be the set of all satisfying assignments of $\formula$.
    The \emph{weighted model count} $\WMC(\formula, \wfunc)$ is defined as  
    \[
         \WMC(\formula, \wfunc):=
         \sum_{{\model\in \mathcal{M}(\formula)}}
         \Big(
         \quad~
         \prod_{\mathclap{\substack{x\in \var(\formula)\\ \model(x)=1}}}\wfunc(x)
         \cdot
         \prod_{\mathclap{\substack{y\in \var(\formula)\\ \model(y)=0}}}\wfunc(\nl{y})
         \Big)
         .
    \]
\end{definition}
In the {\sc Weighted Model Counting} (WMC) problem, given a formula $\formula$ and a weight function $\wfunc$, the goal is to compute the weighted model count $\WMC(\formula, \wfunc)$ of formula $\formula$.
We denote by weighted \#$k$-SAT and weighted \#$2$-occ-SAT the restricted versions of WMC where the inputs are $k$-CNF and $2$-occ-CNF formulas, respectively.

Before detailing the specific extensions, we analyze the computational cost of handling weights.
Let
$W_{\max}:=\max\left\{1,\max_{\ell\in\ltr(\formula)}|\wfunc(\ell)|\right\}$.
Every integer arising in the branching recursion or the sum-product computation has absolute value at most
$2^{\bigO{n+m}}W_{\max}^{\bigO{n}}$ and therefore has bit length
$\bigO{(m+n)\log(1+W_{\max})\bigr}$.
Thus, arithmetic operations on these numbers incur only a polynomial overhead, which is absorbed by the $\bigOstar{\cdot}$ notation.

\subsection{Weighted \#$k$-SAT}
The algorithm \algcount{} in \Cref{subsec:branching} can be modified for the weighted setting. For the base case where $\formula = \emptyset$, the algorithm returns $\prod_{x \in \var(\formula)} (\wfunc(x) + \wfunc(\nl{x}))$, which is the total weight of all possible assignments to the remaining variables. Tautology reduction applies unchanged in the weighted setting.
For a clause $C \in \formula$, the clause branching rule (\Cref{lem:clause-branching}) generalizes to the weighted setting as follows:
\[
    \WMC(\formula, \wfunc) = \WMC(\formula \setminus \{C\}, \wfunc) - \WMC((\formula \setminus \{C\})[C=0], \wfunc)\cdot \prod_{\ell\in C}\wfunc(\nl{\ell}).
\]
Similarly, for an isolated clause $C$, the reduction rule (\Cref{lem:isolated-clause}) generalizes to:
\[
    \WMC(\formula, \wfunc) = \WMC(\operatorname{rem}(\formula,C), \wfunc) \cdot \Big(\prod_{\ell \in C}(\wfunc(\ell) + \wfunc(\nl{\ell})) - \prod_{\ell\in C}\wfunc(\nl{\ell})\Big).
\]
The correctness of these weighted extensions follows the same logic as the proofs of \Cref{lem:clause-branching} and \Cref{lem:isolated-clause}. 
For the branching rule, we partition
$\mathcal{M}(\formula\setminus\{C\})$ into assignments satisfying $C$ and those falsifying $C$. The former are precisely $\mathcal{M}(\formula)$. The latter are in bijection with $\mathcal{M}((\formula\setminus\{C\})[C=0])$: each residual assignment extends uniquely by setting every literal in $C$ to false. The weight of this extension is the weight of the residual assignment multiplied by $\prod_{\ell\in C}\wfunc(\nl{\ell})$. Since the total weight is additive over disjoint sets of assignments, the subtraction correctly yields the total weight of $\mathcal{M}(\formula)$. 
The isolated clause reduction holds because the weights of independent variables are multiplicative; the term $\prod_{\ell \in C}(\wfunc(\ell) + \wfunc(\nl{\ell}))$ represents the total weight of all assignments to the variables in $C$, while $\prod_{\ell \in C}\wfunc(\nl{\ell})$ is the weight of the unique assignment that falsifies $C$.

Since the measure drop in the branching algorithm is the same as in the unweighted case, the complexity is the same as that in \Cref{thm:k-cnf-complexity}.

\begin{theorem}\label{thm:weighted-k-sat}
    Weighted \#$k$-SAT with $m$ clauses can be solved in $\bigOstar{2^{m(1 - \frac{1}{5k})}} \subseteq \bigOstar{2^{m(1 - \Omega(1/k))}}$ time and polynomial space.
\end{theorem}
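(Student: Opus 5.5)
The proof reuses the algorithm $\algcount$ and its analysis from \Cref{subsec:branching} essentially verbatim. The only changes are to what each node of the search tree returns; the shape of the tree stays the same. I would run a weighted variant $\algcount_{\wfunc}$ with the same steps in the same priority order. The base cases return $0$ if there is an empty clause, and $\prod_{x\in\var(\formula)}(\wfunc(x)+\wfunc(\nl{x}))$ if $\formula=\emptyset$. Tautological clause reduction is unchanged. Isolated clause reduction multiplies by $\prod_{\ell\in C}(\wfunc(\ell)+\wfunc(\nl{\ell}))-\prod_{\ell\in C}\wfunc(\nl{\ell})$. Clause branching on the neighboring clause $C'$ returns $\WMC(\formula\setminus\{C'\},\wfunc)-\prod_{\ell\in C'}\wfunc(\nl{\ell})\cdot\WMC((\formula\setminus\{C'\})[C'=0],\wfunc)$.

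Correctness follows from the two weighted identities already justified in the preceding paragraphs, plus the observation that deleting a tautological clause leaves the set of models unchanged and therefore leaves the weighted count unchanged. One small point needs checking in the base case. Variables that no longer occur in any clause but are still in the variable set, because clause deletion retains variables, must each contribute the factor $\wfunc(x)+\wfunc(\nl{x})$ rather than $2$. The stated base case does exactly this.

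For the running time, the pivot selection, the choice of $C'$, and the resulting subformulas depend only on $\formula$ and not on $\wfunc$. So the search tree of $\algcount_{\wfunc}$ on $(\formula,\wfunc)$ coincides with that of $\algcount$ on $\formula$. Therefore \Cref{lem:reduction-measure} and \Cref{lem:measure-drop} apply unchanged, and the recurrence $T(\mu)\le T(\mu-1)+T(\mu-1-1/k)$ gives at most $2^{\mu(1-\frac{1}{5k})}$ leaves, exactly as in the proof of \Cref{thm:k-cnf-complexity}. Since $\mu\le m+1$, the number of leaves is $\bigOstar{2^{m(1-\frac1{5k})}}$.

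The main thing left to verify is that the work per node stays polynomial. Every intermediate value is a signed sum of at most $2^{\bigO{m}}$ products of at most $n$ weights and small integer factors, so its absolute value is at most $2^{\bigO{n+m}}W_{\max}^{\bigO{n}}$ and its bit length is polynomial in the input size. Each arithmetic operation is therefore polynomial. The recursion keeps only one root-to-leaf path of polynomial depth, since $\mu$ drops by at least $1$ per branch, and each stored number has polynomial size, so the algorithm runs in polynomial space. I expect no real obstacle here. The only delicate step is being careful that the multiplier $\prod_{\ell\in C'}\wfunc(\nl{\ell})$ accounts for exactly the variables removed by falsifying $C'$. This holds because $C'$ is not tautological once Step 2 has been exhausted, so each of its variables receives a unique value.
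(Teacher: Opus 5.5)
Your proposal is correct and follows the paper's own argument. It uses the same weighted base case, isolated-clause factor and branching identity with the multiplier $\prod_{\ell\in C'}\wfunc(\nl{\ell})$; the search tree is unchanged, so the measure analysis and the recurrence $T(\mu)\le T(\mu-1)+T(\mu-1-1/k)$ carry over; and the same $2^{\bigO{n+m}}W_{\max}^{\bigO{n}}$ magnitude bound keeps per-node arithmetic polynomial. Your explicit remarks that the branching choices do not depend on $\wfunc$, and that Step 2 guarantees $C'$ is non-tautological, are details the paper leaves implicit but are consistent with it.
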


\subsection{Weighted \#$2$-occ-SAT}
The inclusion-exclusion formulation extends to the weighted setting: For a formula $\formula$ and weight function $\wfunc$, let $N_{\wfunc}(S)$ denote the total weight of assignments falsifying every clause in $S \subseteq \formula$. Then it holds that $\WMC(\formula, \wfunc) = \sum_{S \subseteq \formula} (-1)^{|S|} N_{\wfunc}(S)$.
As in \Cref{sec:2-occ-CNF}, variables of degree zero can be factored out independently; in the weighted setting, each such variable $x$ contributes a factor $\wfunc(x)+\wfunc(\nl{x})$.
We therefore assume that every variable belongs to $V_1\cup V_2$.
Analogously to \Cref{lem:sum_product}, $N_{\wfunc}(S)$ then factorizes into the independent contributions of the variables.

\begin{lemma}[weighted sum-product formulation] \label{lem:weighted_sum_product}
    For any 2-occ-CNF $\formula$ and weight function $\wfunc$, it holds that:
    \begin{equation*}
        \WMC(\formula, \wfunc) = \sum_{y \in \{0, 1\}^m} \left( \prod_{j=1}^m (-1)^{y_j} \prod_{x_i \in V_1} \tilde{U}_i(y_{c_1(i)}) \prod_{x_i \in V_2} \tilde{W}_i(y_{c_1(i)}, y_{c_2(i)}) \right),
    \end{equation*}
    where the local functions $\tilde{U}_i: \{0, 1\} \to \mathbb{Z}$ and $\tilde{W}_i: \{0, 1\}^2 \to \mathbb{Z}$ are defined as follows:
    \begin{itemize}
        \item For $x_i \in V_1$, let $\ell$ be the literal of $x_i$ appearing in $C_{c_1(i)}$. \\
        Then $\tilde{U}_i(y_{c_1(i)}) := (\wfunc(x_i) + \wfunc(\nl{x_i})) \cdot (1 - y_{c_1(i)}) + \wfunc(\nl{\ell}) \cdot y_{c_1(i)}$.
        
        \item For a pure variable $x_i \in V_2$, let $\ell$ be the literal of $x_i$ appearing in both $C_{c_1(i)}$ and $C_{c_2(i)}$. \\
        Then $\tilde{W}_i(y_{c_1(i)}, y_{c_2(i)}) := (\wfunc(x_i) + \wfunc(\nl{x_i})) \cdot (1 - \max(y_{c_1(i)}, y_{c_2(i)})) + \wfunc(\nl{\ell}) \cdot \max(y_{c_1(i)}, y_{c_2(i)})$.
        
        \item For a mixed variable $x_i \in V_2$, we assume without loss of generality that $C_{c_1(i)}$ contains the positive literal $x_i$ and $C_{c_2(i)}$ contains the negative literal $\nl{x_i}$. 
        Then $\tilde{W}_i(y_{c_1(i)}, y_{c_2(i)}) := (\wfunc(x_i) + \wfunc(\nl{x_i}))(1-y_{c_1(i)})(1-y_{c_2(i)}) + \wfunc(\nl{x_i})y_{c_1(i)}(1-y_{c_2(i)}) + \wfunc(x_i)(1-y_{c_1(i)})y_{c_2(i)}$.
    \end{itemize}
\end{lemma}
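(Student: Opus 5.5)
We follow the proof of \Cref{lem:sum_product}, now carrying weights. The plan has three stages: establish the weighted inclusion-exclusion identity, factorize $N_{\wfunc}(S)$ over variables, and check each local function case by case.

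\textbf{Weighted inclusion-exclusion.} For an assignment $\model$ write $\wfunc(\model)$ for its weight. Then
\[
\sum_{S\subseteq\formula}(-1)^{|S|}N_{\wfunc}(S)=\sum_{\model}\wfunc(\model)\sum_{S\subseteq F(\model)}(-1)^{|S|},
\]
where $F(\model)$ is the set of clauses falsified by $\model$. The inner sum is $1$ if $F(\model)=\emptyset$ and $0$ otherwise. Hence the right-hand side equals $\WMC(\formula,\wfunc)$. We then encode $S$ by the indicator vector $y\in\{0,1\}^m$, exactly as before, and the sign becomes $\prod_j(-1)^{y_j}$.

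\textbf{Factorization.} The condition ``$\model$ falsifies every clause in $S$'' is a conjunction of per-variable constraints: for each $x_i$, every literal of $x_i$ occurring in a selected clause must be false. The weight $\wfunc(\model)$ is a product of per-variable factors. Therefore
\[
N_{\wfunc}(S)=\prod_i\ \sum_{b\in\{0,1\}\text{ allowed for }x_i}\wfunc(x_i^{b}),
\]
where $x_i^1=x_i$ and $x_i^0=\nl{x_i}$. Degree-zero variables were already factored out, contributing $\wfunc(x)+\wfunc(\nl{x})$. It therefore suffices to show that $\tilde U_i$ and $\tilde W_i$ equal these per-variable sums.

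\textbf{Case analysis.}
\begin{itemize}
\item \emph{$x_i\in V_1$ with literal $\ell$.} If $y_{c_1(i)}=0$, $x_i$ is free and contributes $\wfunc(x_i)+\wfunc(\nl{x_i})$. If $y_{c_1(i)}=1$, $\ell$ must be false, so the only allowed literal is $\nl{\ell}$, contributing $\wfunc(\nl{\ell})$. The affine formula for $\tilde U_i$ gives exactly these two values.
\item \emph{Pure $x_i\in V_2$.} The same argument applies with $\max(y_{c_1(i)},y_{c_2(i)})$ in place of $y_{c_1(i)}$, since the constraint is identical whenever at least one of the two clauses is selected.
\item \emph{Mixed $x_i\in V_2$.} There are four assignments of $(y_{c_1(i)},y_{c_2(i)})$ to check:
  \begin{itemize}
  \item $(0,0)$: $x_i$ is free, giving $\wfunc(x_i)+\wfunc(\nl{x_i})$.
  \item $(1,0)$: $x_i$ must be false, giving $\wfunc(\nl{x_i})$.
  \item $(0,1)$: $\nl{x_i}$ must be false, so $x_i$ is true, giving $\wfunc(x_i)$.
  \item $(1,1)$: the constraints conflict, giving $0$.
  \end{itemize}
  The given polynomial evaluates to exactly these four values, since each of its three monomials is the indicator of one of the first three cases.
\end{itemize}

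No step is a real obstacle. The only care needed is in the mixed case, to match the orientation convention ($C_{c_1(i)}$ contains $x_i$) and to confirm that the $(1,1)$ entry vanishes.
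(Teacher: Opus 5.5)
Your proposal is correct and follows essentially the same route as the paper. The paper justifies the lemma by asserting weighted inclusion-exclusion and noting that $N_{\wfunc}(S)$ factorizes into per-variable contributions, just as in the proof of \Cref{lem:sum_product}. Your derivation of the weighted inclusion-exclusion identity and your four-case check for mixed variables fill in details that the paper leaves implicit.
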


The above formulation has the same sum-product structure as its unweighted counterpart.
It can therefore be evaluated by the same Split-and-List procedure using fast matrix multiplication, as described in \Cref{lem:sum-product-compt}. Consequently, the overall complexity is the same as that in \Cref{thm:2-occ-cnf-complexity}.

\begin{theorem}\label{thm:weighted-2-occ-sat}
    Weighted \#$2$-occ-SAT with $m$ clauses can be solved in $\bigOstar{2^{m \omega / 3}}\subseteq \bigOstar{1.7297^m}$ time and exponential space, where $\omega < 2.3714$ denotes the matrix multiplication exponent.
\end{theorem}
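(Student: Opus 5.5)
The proof has two parts. First I will check that the weighted sum-product formula of \Cref{lem:weighted_sum_product} holds. Then I will evaluate it with the Split-and-List procedure from the proof of \Cref{lem:sum-product-compt}, unchanged apart from the weights.

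For the first part, I start from the weighted inclusion-exclusion identity $\WMC(\formula,\wfunc)=\sum_{S}(-1)^{|S|}N_{\wfunc}(S)$. This follows exactly as in the unweighted case. An assignment's weight is counted once in every term $S\subseteq F(\model)$, where $F(\model)$ is the set of clauses it falsifies. So it contributes $\sum_{S\subseteq F(\model)}(-1)^{|S|}$ times its weight, which is its weight if $F(\model)=\emptyset$ and $0$ otherwise.

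Degree-zero variables each contribute the factor $\wfunc(x)+\wfunc(\nl{x})$ and are factored out, so every remaining variable lies in $V_1\cup V_2$. Then $N_{\wfunc}(S)$ splits as a product over variables. Each factor is the total weight of those values of $x_i$ that falsify all of its literals occurring in the selected clauses.

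I then check the three cases of \Cref{lem:weighted_sum_product} one by one. If no clause containing $x_i$ is selected, both values are allowed and the factor is $\wfunc(x_i)+\wfunc(\nl{x_i})$. If the selected clauses contain only the literal $\ell$, then $x_i$ is forced to make $\ell$ false, giving $\wfunc(\nl{\ell})$; this covers the pure case with $\max(y_{c_1(i)},y_{c_2(i)})=1$. In the mixed case, selecting both clauses gives $0$, while selecting only the clause with $x_i$ or only the one with $\nl{x_i}$ gives $\wfunc(\nl{x_i})$ or $\wfunc(x_i)$ respectively. These values agree with the stated polynomial expressions on $\{0,1\}^2$.

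For the evaluation, I again set $\tilde W_i(y_{c_1(i)},y_{c_2(i)}):=\tilde U_i(y_{c_1(i)})$ with $c_2(i):=c_1(i)$ for $x_i\in V_1$, pad $m$ to a multiple of $3$, and split the clause indices into $I,J,K$. Each $E_i$ has at most two elements, so the variables partition into $X_{I,J},X_{J,K},X_{K,I}$ as before. I define $F_{I,J},F_{J,K},F_{K,I}$ with $\tilde W_i$ in place of $W_i$, and the result is $\mathrm{Tr}(ABC)$ for $2^{m/3}\times 2^{m/3}$ integer matrices.

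The only new point, and the one that needs care, is the cost of arithmetic. Every matrix entry is a product of at most $n$ numbers of absolute value at most $2W_{\max}$, so its bit length is $\bigO{n\log(1+W_{\max})}$. Entries of $AB$ and $ABC$ are sums of at most $2^{2m/3}$ products of such entries, so their bit length grows only by $\bigO{m}$. Hence each arithmetic operation in fast matrix multiplication costs polynomial time, which is absorbed by $\bigOstar{\cdot}$, as already noted at the start of this section.

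The total running time is therefore $\bigOstar{2^{m\omega/3}}$, with $\bigOstar{2^{2m/3}}$ space for the matrices, proving \Cref{thm:weighted-2-occ-sat}.
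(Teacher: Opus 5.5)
Your proposal is correct and follows essentially the same route as the paper. It uses the weighted inclusion--exclusion identity and the per-variable factorization giving \Cref{lem:weighted_sum_product}, which you verify in more detail than the paper does, followed by the unchanged Split-and-List trace computation of \Cref{lem:sum-product-compt} and the polynomial bit-length bound. One small detail left implicit, by you and by the paper, is that the padding unit clauses $(z)$ on fresh variables must get weight $\wfunc(z)=1$, so that they do not change $\WMC(\formula,\wfunc)$.
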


\section{Hardness of \#$d$-occ-SAT}\label{sec:max-to-occ}
In this section, we prove \Cref{thm:max-to-occ}, restated below.

\MaxToOccReductionThm*

Our proof uses polynomial interpolation, a technique for reductions between counting problems.
It proceeds in three steps: first, encode the desired information in the coefficients of a polynomial of polynomially bounded degree; second, construct counting instances whose counts give the values of this polynomial at sufficiently many distinct points; and third, recover the polynomial by interpolation and decode the desired information from its coefficients.
This paradigm has been used in counting-complexity reductions~\cite{Linial,JVW,Vadhan,journals/talg/DellHMTW14,journals/iandc/Curticapean18}.

Following this paradigm, our reduction has the following outline. Suppose $\Phi$ is the input \#Max-$d$-SAT formula.
\begin{itemize}
\item Construct a polynomial $P_{\Phi}$ whose coefficients encode the optimum and the number of optimal assignments.
\item With the aim of constructing $P_{\Phi}$, we first express assignment scores for \#Max-$d$-SAT as a sum of subset indicators. Similarly, we express the inclusion-exclusion formulation for \#$d$-occ-SAT on positive formulas in terms of subset indicators. This common structure guides the construction of $P_{\Phi}$ as well as the construction of \#$d$-occ-SAT instances.
\item Construct \#$d$-occ-SAT instances on positive formulas to realize evaluations of this polynomial.
Each instance has one clause per variable of the \#Max-$d$-SAT instance.
\item Recover the polynomial by Lagrange interpolation and obtain the optimum and the number of optimal assignments from its coefficients.
\end{itemize}

Throughout this section, $d\ge2$ is fixed.

\subsection{Two formulations over subsets}
For every nonnegative integer $k$, write $[k]=\{1,\ldots,k\}$, with $[0]=\varnothing$.
For a proposition $\mathcal P$, let $\ind{\mathcal P}$ be $1$ if it holds and $0$ otherwise.
All empty products equal $1$.

\subsubsection{The formulation for \#Max-$d$-SAT}
Let $\Phi=C_1\wedge\cdots\wedge C_m$ be the $d$-CNF of the \#Max-$d$-SAT instance, over the variable set $\var(\Phi)=\{x_1,\ldots,x_n\}$.
An assignment is represented by $S=\{i\in[n]:x_i=1\}$.
Its score is the number of clauses it satisfies.
Each copy of an identical clause contributes separately to the score.
Without loss of generality, we assume that $\Phi$ contains no tautological clauses.

For a clause $C$, define $C^+=\{i:x_i\in C\}$ and $C^-=\{i:\nl{x_i}\in C\}$.
These sets are disjoint.
Let $\score_C(S)$ be $1$ if the assignment $S$ satisfies $C$ and $0$ otherwise.
A clause is falsified exactly when all its literals are false.
Expanding gives
\begin{align}
\score_C(S)
 &=1-\prod_{i\in C^+}(1-\ind{i\in S})\prod_{i\in C^-}\ind{i\in S}\notag\\
 &=1-\sum_{U\subseteq C^+}(-1)^{|U|}\ind{C^-\cup U\subseteq S}.
 \label{red-eq:clause}
\end{align}
For example, for $C=x_1\vee\nl{x_2}$, $\score_C(S)=1-\ind{\{2\}\subseteq S}+\ind{\{1,2\}\subseteq S}$.
The score is $0$ exactly when $S=\{2\}$, i.e., $x_1=0$ and $x_2=1$.

Let $\score_\Phi(S)$ be the total number of clauses satisfied by $S$, and let $\OPT(\Phi)=\max_{S\subseteq[n]}\score_\Phi(S)$.
Every subset $C^-\cup U$ has size at most $d$.
Summing \eqref{red-eq:clause} and collecting identical subsets therefore produces integer coefficients $a_A$ with
\begin{equation}\label{red-eq:score}
\score_\Phi(S)=\sum_{\substack{A\subseteq[n]\\|A|\leq d}}
a_A\ind{A\subseteq S}.
\end{equation}
The coefficients can be computed by expanding at most $2^d$ terms per clause.

\subsubsection{The formulation for \#$d$-occ-SAT on positive formulas}
Let $\Psi=D_1\wedge\cdots\wedge D_M$ be a positive $d$-occ-CNF over $\var(\Psi)$.
For each $y\in\var(\Psi)$, define
\begin{equation}\label{red-eq:occ}
\Occ_y=\{i\in[M]:y\in D_i\}.
\end{equation}
The occurrence bound gives $|\Occ_y|\leq d$.
For $R\subseteq[M]$, falsifying the clauses indexed by $R$ sets all variables in $\bigcup_{i\in R}D_i$ to zero and leaves the others free.
Inclusion-exclusion gives
\begin{align}
\#(\Psi)
 &=\sum_{R\subseteq[M]}(-1)^{|R|}
   \bigl|\{\model:\model\text{ falsifies every }D_i,\ i\in R\}\bigr|\notag\\
 &=\sum_{R\subseteq[M]}(-1)^{|R|}
   2^{|\var(\Psi)|-|\bigcup_{i\in R}D_i|}\notag\\
 &=\sum_{R\subseteq[M]}(-1)^{|R|}
   2^{|\{y\in\var(\Psi):\Occ_y\cap R=\varnothing\}|}\notag\\
 &=(-1)^M\sum_{S\subseteq[M]}(-1)^{|S|}
   2^{\sum_{y\in\var(\Psi)}\ind{\Occ_y\subseteq S}}.
 \label{red-eq:count}
\end{align}
For the last equality, we substitute $S=[M]\setminus R$.
Under this substitution, $\Occ_y\cap R=\varnothing$ becomes $\Occ_y\subseteq S$, and $|R|=M-|S|$ gives $(-1)^{|R|}=(-1)^M(-1)^{|S|}$.

\subsection{The Reduction}
With the formulations above, we are ready to present the reduction in \Cref{thm:max-to-occ}.
The main idea is as follows.
Let $\Phi$ be the input $d$-CNF on $n$ variables for \#Max-$d$-SAT.
Our aim is to construct a univariate polynomial $P_\Phi$ whose degree is polynomially bounded in the size of $\Phi$ and whose coefficients allow us to compute both $\OPT(\Phi)$ and the number of assignments attaining this score.
We construct $P_\Phi$ so that, for each integer $r\geq1$, we can build a positive $d$-occ-CNF $\Psi_r$ with exactly $n$ clauses satisfying $\#(\Psi_r)=(-1)^nP_\Phi(2^r)$.
The number of satisfying assignments for $\Psi_r$ therefore allows us to compute $P_\Phi(2^r)$.
Then, by choosing polynomially many distinct values of $r$, we obtain sufficiently many evaluations to recover the coefficients of $P_\Phi$ using Lagrange interpolation.
Finally, we use these coefficients to compute $\OPT(\Phi)$ and the number of optimal assignments.
Below, we first present the construction of $P_\Phi$ and then the construction of positive $d$-occ-CNF $\Psi_r$.

\subsubsection{Encoding the objective in a polynomial}
For clarity, before defining the univariate polynomial $P_\Phi$, we introduce an auxiliary multivariate polynomial $Q_\Phi$:
\begin{equation}\label{red-eq:Q}
Q_\Phi(X,Y,Z)=\sum_{S\subseteq[n]}(-1)^{|S|}X^{p(S)}Y^{q(S)}Z^{|S|},
\end{equation}
where $p(S)$ and $q(S)$ are the sums of contributions from the positive and negative coefficients $a_A$ with $A\ne\varnothing$ in \eqref{red-eq:score}, respectively, with the negative coefficients taken in absolute value:
\begin{align}
p(S)&=\sum_{\substack{A\ne\varnothing\\a_A>0}}a_A\ind{A\subseteq S},
&q(S)&=\sum_{\substack{A\ne\varnothing\\a_A<0}}(-a_A)\ind{A\subseteq S}.
\label{red-eq:pq}
\end{align}
Here and below, sums over $A$ range over subsets of $[n]$ of size at most $d$.
Setting $c=a_\varnothing$, we have
\begin{equation}
\score_\Phi(S)=c+p(S)-q(S).
\end{equation}

We now give intuition behind the definition of $Q_\Phi$.
We want to evaluate a polynomial using the count of a \#$d$-occ-SAT instance on a positive formula.
A natural starting point is the score generating function $F_\Phi(X)=\sum_{S\subseteq[n]}X^{\score_\Phi(S)}$ (see, e.g.,~\cite{journals/talg/ScottS09}).
Its largest exponent is the optimum, and the corresponding coefficient is the number of optimal assignments.
However, using it here faces two obstacles.

First, the formulation for \#$d$-occ-SAT on positive formulas in \eqref{red-eq:count} contains the factor $(-1)^{|S|}$.
Incorporating this factor directly into $F_\Phi$ gives $\widetilde F_\Phi(X)=\sum_{S\subseteq[n]}(-1)^{|S|}X^{\score_\Phi(S)}$.
However, equal scores can receive opposite signs and cancel: if exactly two assignments have score $j$, one with even $|S|$ and one with odd $|S|$, the coefficient of $X^j$ is $1-1=0$.
To prevent this cancellation, we can record $|S|$ with an additional variable and use the modified polynomial
\[
G_\Phi(X,Z)=\sum_{S\subseteq[n]}(-1)^{|S|}X^{\score_\Phi(S)}Z^{|S|}.
\]
All contributions to the coefficient of $X^jZ^h$ now have the same sign $(-1)^h$, so they cannot cancel.

Second, with $M=n$, the exponent in \eqref{red-eq:count} can be expressed as $g(S)=\sum_A k_A\ind{A\subseteq S}$, where $k_A\geq0$ counts variables with occurrence set $A$.
However, the coefficients $a_A$ in \eqref{red-eq:score} may be negative, whereas each $k_A$ must be nonnegative because it counts variables.
Thus we record the contributions from positive and negative coefficients, represented by $p(S)$ and $q(S)$, in separate coordinates, yielding $Q_\Phi$ in \eqref{red-eq:Q}, and subtract only when decoding $c+p(S)-q(S)$.

We then use $Q_\Phi$ to construct the desired univariate polynomial $P_\Phi$.
To this end, define
\begin{equation}\label{red-eq:base}
W=\sum_{A\ne\varnothing}|a_A|,\qquad B=W+1.
\end{equation}
Thus $0\leq p(S),q(S)<B$.
Expanding each clause gives terms whose absolute coefficient values sum to at most $1+2^d$.
Hence
\begin{equation}\label{red-eq:Bbound}
B\leq 1+m(1+2^d).
\end{equation}
In particular, $B=O(m)$ for fixed $d$.
Define the new univariate polynomial
\begin{equation}\label{red-eq:P}
P_\Phi(T)=Q_\Phi(T,T^B,T^{B^2})
=\sum_{S\subseteq[n]}(-1)^{|S|}T^{E(S)},
\text{ where }
E(S)=p(S)+Bq(S)+B^2|S|.
\end{equation}
Since $0\leq p(S),q(S)<B$ for every assignment $S$, the degree of $P_\Phi$ is at most
\begin{equation}\label{red-eq:degree}
(B-1)+B(B-1)+B^2n=B^2(n+1)-1=:D.
\end{equation}
For fixed $d$, the bound $D$ is polynomial in the input size of the \#Max-$d$-SAT instance.

\begin{lemma}[Coefficient decoding]\label{red-lem:decode}
For every nonnegative integer $e$, let $b_e$ denote the coefficient of $T^e$ in $P_\Phi(T)$, and set $p_e=e\bmod B$, $q_e=\lfloor e/B\rfloor\bmod B$, and $h_e=\lfloor e/B^2\rfloor$.
Then $b_e=(-1)^{h_e}|\{S\subseteq[n]:E(S)=e\}|$ for every $e\geq0$.
Define the support $\supp(P_\Phi)=\{e:b_e\ne0\}$, the set of exponents with nonzero coefficients.
Then
\begin{equation}\label{red-eq:optdecode}
\OPT(\Phi)=c+\max_{e\in\supp(P_\Phi)}(p_e-q_e).
\end{equation}
Moreover, the number of optimal assignments is
\begin{equation}\label{red-eq:countdecode}
\sum_{\substack{e\in\supp(P_\Phi)\\c+p_e-q_e=\OPT(\Phi)}}|b_e|.
\end{equation}
\end{lemma}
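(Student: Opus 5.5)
The argument is a base-$B$ digit decoding of the exponent $E(S)$, followed by collecting terms in $P_\Phi$ by exponent.

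First I establish uniqueness of decoding. For every $S\subseteq[n]$ the bounds $0\le p(S),q(S)<B$ hold. These follow from \eqref{red-eq:pq} and \eqref{red-eq:base}: $p(S)$ and $q(S)$ are each at most $\sum_{A\ne\varnothing}|a_A| = W = B-1$. Hence in $E(S)=p(S)+Bq(S)+B^2|S|$ the first two terms are the two lowest base-$B$ digits, and $|S|$ is the quotient by $B^2$. So whenever $E(S)=e$ we get
\[
p(S)=e\bmod B,\qquad q(S)=\lfloor e/B\rfloor\bmod B,\qquad |S|=\lfloor e/B^2\rfloor .
\]
In the notation of the lemma this reads $p(S)=p_e$, $q(S)=q_e$, $|S|=h_e$. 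The map $S\mapsto E(S)$ therefore determines the triple $(p(S),q(S),|S|)$.

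Second, I collect terms in \eqref{red-eq:P} by exponent. This gives $b_e=\sum_{S:E(S)=e}(-1)^{|S|}$. Every $S$ in this sum has $|S|=h_e$, so all signs agree and there is no cancellation. Hence $b_e=(-1)^{h_e}|\{S:E(S)=e\}|$, which is the first claim.

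Third, I derive the decoding formulas. From the first claim, $e\in\supp(P_\Phi)$ holds iff some $S$ has $E(S)=e$. For such an $S$, $\score_\Phi(S)=c+p(S)-q(S)=c+p_e-q_e$. Conversely, every $S$ contributes its exponent $E(S)$ to the support. So the set $\{c+p_e-q_e : e\in\supp\}$ equals the set of attained scores, and taking the maximum gives \eqref{red-eq:optdecode}. For the count, partition the optimal assignments according to the value of $E(S)$. Each class with $E(S)=e$ has size $|b_e|$, and all its members have score $c+p_e-q_e$. Summing $|b_e|$ over the support elements with $c+p_e-q_e=\OPT(\Phi)$ yields \eqref{red-eq:countdecode}.

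The only step needing care is the digit bound $p(S),q(S)<B$, which makes the base-$B$ representation unambiguous; everything else is bookkeeping.
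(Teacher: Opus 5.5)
Your proposal is correct and follows the paper's proof essentially step for step. The bound $0\le p(S),q(S)<B$ makes $E(S)$ decode uniquely into $(p(S),q(S),|S|)$, so every term contributing to $b_e$ has sign $(-1)^{h_e}$; the optimum and the number of optimal assignments are then read off from the support exactly as you describe (the paper also notes the degenerate case $B=1$, where $p=q=0$ and $E(S)=|S|$, which your formulas already cover).
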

\begin{proof}
For an assignment with $E(S)=e$, taking $e$ modulo $B$ gives $p(S)=p_e$.
Taking $\lfloor e/B\rfloor$ modulo $B$ gives $q(S)=q_e$.
Finally, $0\leq p(S)+Bq(S)<B^2$, so taking $\lfloor e/B^2\rfloor$ gives $|S|=h_e$.
These statements remain valid for $B=1$: then $p=q=0$ and $E(S)=|S|$.
Thus every assignment contributing to $b_e$ has sign $(-1)^{h_e}$, proving the stated formula for $b_e$.
The coefficient $b_e$ is nonzero exactly when there is an assignment $S$ with $E(S)=e$.
The identity $\score_\Phi(S)=c+p(S)-q(S)$ then gives \eqref{red-eq:optdecode}.
The support is nonempty because there is at least one assignment and no coefficient cancellation.
By this formula, $|b_e|$ is the number of assignments with exponent $e$, each having score $c+p_e-q_e$.
Summing these counts over all $e$ with $c+p_e-q_e=\OPT(\Phi)$ therefore gives the number of optimal assignments, as stated in \eqref{red-eq:countdecode}.
\end{proof}

\subsubsection{Realizing evaluations by \#$d$-occ-SAT instances on positive formulas}
For the given \#Max-$d$-SAT instance $\Phi$ and an integer $r\geq1$, we construct a positive CNF $\Psi_r$ as follows.
Create $n$ clauses $D_1,\ldots,D_n$ and introduce the following three groups of distinct fresh variables:
\begin{itemize}
\item Group 1: For each nonempty $A$ with $a_A>0$, introduce $ra_A$ variables, each placed in exactly the clauses $D_i$ with $i\in A$.
\item Group 2: For each nonempty $A$ with $a_A<0$, introduce $rB(-a_A)$ variables, each placed in exactly the clauses $D_i$ with $i\in A$.
\item Group 3: For every $i\in[n]$, introduce $rB^2$ variables placed only in $D_i$.
\end{itemize}
All literals are positive.
A variable from groups 1 and 2 has occurrence set $A$, hence occurs in at most $d$ clauses; a variable from group 3 occurs once.
These last variables also ensure that each clause is nonempty when $n\geq1$.
The formula has exactly $n$ clauses.

\begin{lemma}[Evaluation identity]\label{red-lem:evaluation}
For every integer $r\geq1$, $\#(\Psi_r)=(-1)^nP_\Phi(2^r)$.
\end{lemma}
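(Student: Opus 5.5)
The plan is to evaluate the inclusion-exclusion formula \eqref{red-eq:count} directly on $\Psi_r$, with $M=n$, and to show that its exponent of $2$ equals $rE(S)$ for every $S\subseteq[n]$. By \eqref{red-eq:count},
\[
\#(\Psi_r)=(-1)^n\sum_{S\subseteq[n]}(-1)^{|S|}2^{g_r(S)},\qquad g_r(S):=\sum_{y\in\var(\Psi_r)}\ind{\Occ_y\subseteq S}.
\]
Comparing this with the definition \eqref{red-eq:P} of $P_\Phi$, it suffices to prove $g_r(S)=r\,E(S)=r\bigl(p(S)+Bq(S)+B^2|S|\bigr)$ for every $S\subseteq[n]$. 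Then $2^{g_r(S)}=(2^r)^{E(S)}$, and the identity follows termwise.

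To compute $g_r(S)$, split the sum over $y$ according to the three groups of fresh variables. Every variable has the occurrence set prescribed by the construction, so each group can be handled separately.

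\textbf{Group 1.} For each nonempty $A$ with $a_A>0$ there are $ra_A$ variables, each with $\Occ_y=A$. Their total contribution is $\sum_{A\ne\varnothing,\,a_A>0} ra_A\ind{A\subseteq S}=r\,p(S)$ by \eqref{red-eq:pq}.

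\textbf{Group 2.} For each nonempty $A$ with $a_A<0$ there are $rB(-a_A)$ variables with $\Occ_y=A$. They contribute $rB\,q(S)$.

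\textbf{Group 3.} For each $i\in[n]$ there are $rB^2$ variables with $\Occ_y=\{i\}$. Since $\ind{\{i\}\subseteq S}=\ind{i\in S}$, they contribute $rB^2\sum_i\ind{i\in S}=rB^2|S|$.

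Adding the three contributions gives $g_r(S)=rE(S)$, which proves the claim.

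Only a few bookkeeping points need attention. First, \eqref{red-eq:count} must be applicable to $\Psi_r$. It requires only a positive CNF, and that holds by construction; the variable set of $\Psi_r$ is exactly the set of fresh variables. Second, the case $n=0$ should be treated separately. There $\Psi_r$ is empty with no variables, so $\#(\Psi_r)=1$, while $P_\Phi(2^r)=1$ because the only term is $S=\varnothing$ with $E(\varnothing)=0$. This agrees with $(-1)^0P_\Phi(2^r)$. Third, the group-1 and group-2 counts are nonnegative integers because of the signs of the $a_A$. I expect no real obstacle here; the substantive content is already in \eqref{red-eq:count}, and the proof amounts to matching exponents.
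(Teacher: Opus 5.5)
Your proposal is correct and matches the paper's proof step for step: you split $\var(\Psi_r)$ into the three groups of fresh variables, show that the exponent in \eqref{red-eq:count} equals $rp(S)+rBq(S)+rB^2|S|=rE(S)$, and substitute with $M=n$. The separate $n=0$ case is harmless but not needed, because the general argument already covers it: all groups are empty and both sides equal $1$.
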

\begin{proof}
Fix $S\subseteq[n]$.
Let $V_1,V_2,V_3$ be the sets of fresh variables introduced in groups 1, 2, and 3, respectively.
They form a disjoint partition $\var(\Psi_r)=V_1\mathbin{\dot\cup}V_2\mathbin{\dot\cup}V_3$.
Each variable contributes $1$ to {the sum $\sum_{y\in \var(\Psi_r)}\ind{\Occ_y\subseteq S}$} if its occurrence set is contained in $S$, and $0$ otherwise.

For each nonempty $A$ with $a_A>0$, $V_1$ contains $ra_A$ variables with $\Occ_y=A$.
For each nonempty $A$ with $a_A<0$, $V_2$ contains $rB(-a_A)$ variables with the same occurrence set.
For each $i\in[n]$, $V_3$ contains $rB^2$ variables with $\Occ_y=\{i\}$.
Therefore, using the definitions of $p(S)$ and $q(S)$ in \eqref{red-eq:pq},
\begin{align*}
\sum_{y\in V_1}\ind{\Occ_y\subseteq S}
&=r\sum_{\substack{A\ne\varnothing\\a_A>0}}a_A\ind{A\subseteq S}=rp(S),\\
\sum_{y\in V_2}\ind{\Occ_y\subseteq S}
&=rB\sum_{\substack{A\ne\varnothing\\a_A<0}}(-a_A)\ind{A\subseteq S}=rBq(S),\\
\sum_{y\in V_3}\ind{\Occ_y\subseteq S}
&=rB^2\sum_{i\in[n]}\ind{\{i\}\subseteq S}=rB^2|S|.
\end{align*}
Summing over the partition gives
\[
\sum_{y\in\var(\Psi_r)}\ind{\Occ_y\subseteq S}
=\sum_{j=1}^3\sum_{y\in V_j}\ind{\Occ_y\subseteq S}
=rp(S)+rBq(S)+rB^2|S|=rE(S).
\]
Substituting into the formulation for \#$d$-occ-SAT on positive formulas in \eqref{red-eq:count} with $M=n$ gives
\[
\#(\Psi_r)=(-1)^n\sum_{S\subseteq[n]}(-1)^{|S|}2^{rE(S)}
=(-1)^nP_\Phi(2^r),
\]
as required.
\end{proof}

\subsubsection{Putting it all together}
We are now ready to prove \Cref{thm:max-to-occ}.
\begin{proof}[Proof of \Cref{thm:max-to-occ}]

Let $\Phi$ be a given \#Max-$d$-SAT instance on $n$ variables.
We compute its number of optimal assignments using \#$d$-occ-SAT queries on positive formulas with exactly $n$ clauses.
With $D$ as defined in \eqref{red-eq:degree}, query $\Psi_r$ for $r=1,\ldots,D+1$.
Using the counts returned by these \#$d$-occ-SAT queries on positive formulas, Lemma~\ref{red-lem:evaluation} gives $P_\Phi(2^r)$ at $D+1$ distinct points.
Since the degree of $P_\Phi$ is at most $D$, we can recover its coefficients from these $D+1$ values by Lagrange interpolation.
Lemma~\ref{red-lem:decode} then gives $\OPT(\Phi)$ and the number of optimal assignments via \eqref{red-eq:optdecode} and \eqref{red-eq:countdecode}, respectively.

We now analyze the time complexity of the reduction.
Each coefficient $a_A$ with $A\ne\varnothing$ contributes at most $rB|a_A|$ variables, so the number $N_r$ of variables in $\Psi_r$ satisfies $N_r\leq r(BW+nB^2)$.
There are $D+1=B^2(n+1)$ queries, and each formula $\Psi_r$ has length at most $dN_r$.
By the bound on $B$ in \eqref{red-eq:Bbound} and the definition of $D$ in \eqref{red-eq:degree}, both the number of queries and the size of each formula are polynomial in the input size.
Moreover, $\#(\Psi_r)\leq2^{N_r}$, so each returned count requires at most $N_r+1$ bits.
Since the degree upper bound $D$, the bit lengths of the points $2^r$, and the bit lengths of the returned counts are all polynomially bounded in the input size, Lagrange interpolation can be performed in polynomial time.
Summing the absolute values of the relevant coefficients takes polynomial time.

All queries depend only on $\Phi$, so the reduction is non-adaptive.
There are polynomially many queries, each with exactly $n$ clauses, and all remaining computation takes polynomial time.
Thus an algorithm for \#$d$-occ-SAT on positive formulas running in $\bigOstar{\gamma^M}$ time yields an algorithm for \#Max-$d$-SAT running in $\bigOstar{\gamma^n}$ time.
If the former algorithm uses polynomial space, so does the latter.
This completes the proof.
\end{proof}

\section{Conclusion and Discussion}
We have explored the complexity landscape of \#SAT parameterized by the number of clauses $m$.
We have presented algorithms that break the $2^m$-barrier for \#$k$-SAT and \#2-occ-SAT, and these results naturally generalize to the weighted setting.
Observe that if one recursively applies the clause branching rule to all clauses, the resulting search tree exactly corresponds to the standard subset-based inclusion-exclusion formulation. Basically, our algorithms exploit the specific structural properties of $k$-CNF and $2$-occ-CNF, respectively, to evaluate this formulation without explicitly exploring all $2^m$ terms.

For \#$d$-occ-SAT with $d\ge3$, our results suggest that breaking the $2^m$-barrier may be difficult: for every fixed $d\ge3$, no $\bigOstar{(2-\varepsilon)^m}$-time algorithm exists for any constant $\varepsilon>0$, even on positive formulas, under the \#Max-3-SAT Hypothesis.
It is worth mentioning that, in contrast, the $2^m$-barrier can be broken when computing only the parity of the number of satisfying assignments of a $d$-occ-CNF~\cite{conf/sat/JainPSTX26}.

\section*{Declarations}
We used OpenAI ChatGPT to assist in proofreading, language polishing, and checking the clarity and logical consistency of the proofs, resulting in minor modifications. In addition, the idea for the reduction in \Cref{thm:max-to-occ} was suggested by ChatGPT-5.6 Sol; the authors checked its correctness and rewrote its exposition. The authors assume responsibility for all content.

\bibliography{_ref}
\bibliographystyle{alpha}
\end{document}